\numberwithin{equation}{section}
\newtheorem{teo}{Theorem}[section]
\newtheorem{lem}[teo]{Lemma}
\newtheorem{prop}[teo]{Proposition}
\newtheorem{conj}[teo]{Conjecture}
\newtheorem{corollary}[teo]{Corollary}
\theoremstyle{remark}
\newtheorem{remark}[teo]{Remark}
\newcommand{\R}{\mathbb{R}} 
\newcommand{\C}{\mathbb{C}} 
\renewcommand{\P}{\mathcal{P}} 
\renewcommand{\H}{\mathcal{H}} 
\newcommand{\bB}{{\mathbb{B}_n}}
\title[Wehrl-type entropy conjecture for $SU(N)$: cases of equality]{The Wehrl-type entropy conjecture for symmetric $SU(N)$ coherent states: cases of equality and stability}
\author{Fabio Nicola}
\address[Fabio Nicola]{Dipartimento di Scienze Matematiche, Politecnico di Torino, Corso Duca degli Abruzzi 24, 10129 Torino, Italy}
\email{fabio.nicola@polito.it}
\author{Federico Riccardi}
\address[Federico Riccardi]{Dipartimento di Scienze Matematiche, Politecnico di Torino, Corso Duca degli Abruzzi 24, 10129 Torino, Italy}
\email{federico.riccardi@polito.it}
\author{Paolo Tilli}
\address[Paolo Tilli]{Dipartimento di Scienze Matematiche, Politecnico di Torino, Corso Duca degli Abruzzi 24, 10129 Torino, Italy}
\email{paolo.tilli@polito.it}
\begin{document}	
	
	\keywords{Wehrl entropy, coherent states, representations, holomorphic polynomials, stability, Bergman spaces}
	\subjclass[2020]{81R30, 22E70, 30H20, 47N50, 49J40}

	\begin{abstract}
		\noindent Lieb and Solovej
		proved that, for the symmetric $SU(N)$ representations, the corresponding Wehrl-type entropy is minimized by symmetric coherent states. However, the uniqueness of the minimizers remained an open problem when $N\geq 3$. In this note, we complete the proof of the Wehrl entropy conjecture for such representations by showing that symmetric coherent states are, in fact, the only minimizers. We also provide an application to the maximum concentration of holomorphic polynomials and deduce a corresponding Faber-Krahn inequality. A sharp quantitative form of the bound by Lieb and Solovej is also proved. 
	\end{abstract}

    \maketitle
	
	\section{Introduction}
	In the late 1970s Wehrl introduced a notion of classical entropy for quantum density matrices and conjectured that the states with minimal classical entropy are exactly given by Glauber coherent states \cite[page 355]{wehrl_entropy}. In 1978 Lieb \cite{lieb_entropy} proved that this entropy is indeed minimized by Glauber coherent states, while later Carlen \cite{carlen} proved that these are the only minimizers. In his paper, Lieb also conjectured that an analogous result should hold for the irreducible representations of $SU(2)$: the desired bound was proved several years later by Lieb and Solovej \cite{lieb_solovej_SU(2)}, who further generalized the result to the symmetric representations of $SU(N)$ \cite{lieb_solovej_SU(N)}. The uniqueness of the extremizers was proved independently and simultaneously in \cite{frank_sharp_inequalities} and \cite{ortega} in the case $SU(2)$, while it remained open when $N \geq 3$. In this note, we complete the proof of the Wehrl entropy conjecture for the symmetric representations of $SU(N)$ by showing that the symmetric coherent states are the only minimizers. We anticipate here this result and refer to Section \ref{sec 2} for notation, terminology and proof. 
    
    Let $N \geq 2$, $M \geq 1$ be integers. Consider the irreducible representation of $SU(N)$ in the Hilbert space of totally symmetric tensor products given by  \[
    \H_M \coloneqq \big( \bigotimes\! ^M \C^N \big)_{\text{sym}}.
    \]
    The corresponding coherent states are the density operators in $\H_M$ (in fact, rank-one projections) of the form 
    \[
 |\otimes^M v \rangle \langle \otimes^M v|\quad  \textrm{with}\ v\in \C^N,\ |v|=1.
    \]
    With any density operator $\rho$ on $\H_M$ is associated a continuous  function $u:SU(N)\to[0,1]$, called the Husimi function of $\rho$. 
    
    The Lieb-Solovej inequality can be stated as follows. 
    \begin{teo}[Lieb-Solovej inequality \cite{lieb_solovej_SU(N)}]\label{th:Lieb-Solovej inequality}
		Let $\Phi \colon [0,1] \to \R$ be a convex function. Then, for any density operator $\rho$ on $\H_M$ we have
		\begin{equation}\label{eq:Lieb-Solovej inequality}
		\int_{SU(N)} \Phi( u(R) ) \, dR\leq 	\int_{SU(N)} \Phi( u_0(R) ) \, dR,
		\end{equation}
		where $u_0$ is the Husimi function of any coherent state.
	\end{teo}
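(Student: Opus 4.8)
The plan is to reduce the convex-function inequality \eqref{eq:Lieb-Solovej inequality} to a single majorization statement, then to an extremal problem for holomorphic polynomials, and finally to confront the sharp concentration estimate that lies at the heart of the matter. Recall that the Husimi function has the form $u(R)=\langle \Omega_R,\rho\,\Omega_R\rangle$, where $\Omega_R=R\,\Omega_0$ is the coherent vector at $R$ and $\Omega_0=\otimes^M v_0$ is a fixed highest-weight state. Since the coherent states resolve the identity, $\int_{SU(N)}|\Omega_R\rangle\langle\Omega_R|\,dR=(\dim\H_M)^{-1}I$ by Schur's lemma, one gets $\int_{SU(N)}u(R)\,dR=(\dim\H_M)^{-1}\operatorname{tr}\rho=(\dim\H_M)^{-1}$, the \emph{same} value for every density operator, coherent states included. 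As both integrands in \eqref{eq:Lieb-Solovej inequality} thus have equal total mass, the Hardy--Littlewood--Pólya theorem makes \eqref{eq:Lieb-Solovej inequality} (for every convex $\Phi$) equivalent to the majorization $u\prec u_0$, i.e.\ to the family
\[
\int_{SU(N)} \big(u(R)-t\big)_+ \, dR \le \int_{SU(N)} \big(u_0(R)-t\big)_+\,dR,\qquad t\ge 0.
\]

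Next I would remove the generality of $\rho$. As $u_\rho(R)=\langle\Omega_R,\rho\,\Omega_R\rangle$ is linear in $\rho$ and $\Phi$ is convex, the spectral decomposition $\rho=\sum_k\lambda_k|\psi_k\rangle\langle\psi_k|$ gives $\int\Phi(u_\rho)\le\sum_k\lambda_k\int\Phi(u_{\psi_k})$, so (using $\sum_k\lambda_k=1$) it suffices to prove the bound for pure states $\rho=|\psi\rangle\langle\psi|$, for which $u(R)=|\langle\Omega_R,\psi\rangle|^2$. Here I would pass to the concrete model identifying $\H_M$ with the homogeneous holomorphic polynomials of degree $M$ in $N$ variables: $\psi$ corresponds to a polynomial $P$, the coherent vector $\otimes^M v$ to the power $\langle\,\cdot\,,v\rangle^M$, and the overlap to an evaluation of $P$ through the reproducing kernel. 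Pushing the Haar measure forward to the Fubini--Study probability measure $\mu$ on $\mathbb{CP}^{N-1}$ (on which $u$ descends, being phase-invariant), the theorem becomes the assertion that, among degree-$M$ holomorphic polynomials of fixed $L^2(\mu)$ norm, the pure powers of a linear form, $P_0(v)=\langle v,v_0\rangle^M$, maximize the concentration functional
\[
\int_{\mathbb{CP}^{N-1}}\Phi\big(|P(v)|^2\big)\,d\mu(v).
\]
This is exactly the reproducing-kernel/Bergman-space reformulation anticipated in the abstract, and the source of the stated application to holomorphic polynomials.

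The genuine obstacle — the step I expect to carry all the difficulty — is proving this concentration bound (equivalently the majorization above) with the pure power as extremizer, since the estimate is \emph{sharp} and hence immune to soft compactness or symmetrization arguments alone. I see two plausible routes. The first is semiclassical: using the inductive structure from tensoring $\H_M$ with the defining representation $\C^N$ (linking level $M$ to level $M+1$), one would establish a monotonicity of the functional along the resulting tower and identify the limit $M\to\infty$, in which the Fubini--Study geometry flattens near $v_0$, the $SU(N)$ coherent states degenerate to Glauber coherent states on $\C^{N-1}\cong T_{v_0}\mathbb{CP}^{N-1}$ (with $1/M$ as the semiclassical parameter), and the problem collapses to Lieb's classical Wehrl inequality for the Heisenberg group. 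The second is a direct rearrangement on $\mathbb{CP}^{N-1}$: the residual $U(N-1)$ symmetry about the axis $v_0$ reduces matters to the single radial variable $s=|\langle v,v_0\rangle|^2$, whose law under $\mu$ is the explicit Beta density $(N-1)(1-s)^{N-2}\,ds$ and for which $|P_0|^2=s^M$; one must then show, presumably via a subordination or log-concavity property of the push-forward densities, that $|P|^2\prec|P_0|^2$. Either way the crux is to extract the precise mechanism by which a single power of a linear form dominates, in the majorization order, every other polynomial of the same degree and norm — which is the analytic heart of the Lieb--Solovej result.
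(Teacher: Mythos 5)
Theorem \ref{th:Lieb-Solovej inequality} is not proved in this paper at all: it is imported verbatim from \cite{lieb_solovej_SU(N)}, and the paper's own contributions (Theorems \ref{th:main theorem} and \ref{teo main2}) take it as a black box. So your attempt has to be measured against the original Lieb--Solovej argument. Your preliminary reductions are correct: the equal-mass identity $\int u\,dR=(\dim\H_M)^{-1}$ via the resolution of the identity is \eqref{eq:integral of Husimi function}; the reduction to pure states by pointwise convexity of $\Phi$ applied to $u_\rho=\sum_k\lambda_k u_{\psi_k}$ is valid; the Hardy--Littlewood--P\'olya equivalence between the all-convex-$\Phi$ inequality and the majorization $u\prec u_0$ is exactly the dictionary the paper itself uses (in the reverse direction) in Section \ref{sec 3}; and the reformulation on $\C P^{N-1}$ in terms of holomorphic polynomials matches the paper's Corollary \ref{cor 1}. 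But all of this only rephrases the statement. The entire analytic content --- the sharp concentration bound for polynomials, equivalently the majorization by $s^M$ --- is left unproved: you explicitly label it ``the genuine obstacle'' and offer two routes without carrying out either one. That is the gap, and it is not a small one; it is the theorem.

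Of your two sketched routes, the first (the inductive channel linking $\H_M$ to $\H_{M+1}$, monotonicity of the functional along the tower, and the limit $M\to\infty$ collapsing to Lieb's 1978 theorem for Glauber states) is in substance what Lieb and Solovej actually did --- the paper's introduction alludes to this when it says their proof ``was based on a limiting argument.'' But the monotonicity step is precisely where all the work lies, and you give no argument for it; naming the strategy is not proving it. The second route is worse than incomplete: it is flawed as stated. The residual $U(N-1)$ symmetry about $v_0$ does \emph{not} reduce the problem to the single radial variable $s=|\langle v,v_0\rangle|^2$, because a general polynomial $P$, unlike $P_0$, is not $U(N-1)$-invariant, and averaging $\Phi(|P|^2)$ over the $U(N-1)$ orbits goes the wrong way through the convexity of $\Phi$ (Jensen gives you an inequality in the unhelpful direction). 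After that false reduction, your remaining task ``show $|P|^2\prec|P_0|^2$'' is just a restatement of the goal, with no mechanism supplied. So the proposal is a correct chain of reductions terminating in the unproven theorem itself.
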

Here $dR$ denotes the probability Haar measure on $SU(N)$. 
We now state our first result.
	\begin{teo}[full Wehrl's conjecture]\label{th:main theorem}
		Let $\Phi \colon [0,1] \to \R$ be a strictly convex function, and let $\rho$ be a density operator on $\H_M$ with Husimi function $u$. Then, the equality in \eqref{eq:Lieb-Solovej inequality} is achieved if and only if $\rho$ is a coherent state.
	\end{teo}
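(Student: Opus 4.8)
The plan is to convert the single equality hypothesis into full equimeasurability of $u$ and $u_0$, and then to run a rigidity argument for holomorphic polynomials based on the behaviour of the distribution function near zero. First I would record that both Husimi functions have the same average: integrating over $SU(N)$ and using Schur's lemma on the irreducible $\H_M$, one gets $\int_{SU(N)} u\, dR = (\dim \H_M)^{-1}\operatorname{tr}\rho = \int_{SU(N)} u_0\, dR$. Denoting by $\mu$ and $\nu$ the laws of $u$ and $u_0$ (the pushforwards of $dR$), Theorem \ref{th:Lieb-Solovej inequality} says precisely that $\mu \preceq_{\mathrm{cx}} \nu$ in the convex order. By Strassen's theorem there is a martingale coupling $(X,Y)$ with $X\sim \mu$, $Y\sim\nu$ and $\mathbb{E}[Y\mid X]=X$. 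For the given strictly convex $\Phi$, conditional Jensen gives $\mathbb{E}[\Phi(Y)\mid X]\ge \Phi(X)$ with equality if and only if $Y=X$; hence equality in \eqref{eq:Lieb-Solovej inequality} forces $Y=X$ almost surely, i.e. $\mu=\nu$. Thus $u$ and $u_0$ are equimeasurable.

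Next I would upgrade equimeasurability to purity. Since the coherent-state Husimi function $u_0(R)=|\langle v_R, w\rangle|^{2M}$ attains the value $1$, equimeasurability gives $\operatorname{ess\,sup} u = 1$; as $u(R)=\langle \otimes^M v_R,\rho\,\otimes^M v_R\rangle\le \lambda_{\max}(\rho)$ and $u$ is continuous, this yields $\lambda_{\max}(\rho)=1$, so the trace-one operator $\rho$ is the rank-one projection onto some unit vector $\psi$. Identifying $\H_M$ with the homogeneous holomorphic polynomials of degree $M$ on $\C^N$, we may write $u(R)=|F(v_R)|^2$ with $F$ homogeneous of degree $M$ and $\|F\|=1$, the coherent states corresponding exactly to the powers $c\,\ell^M$ of a linear form. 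Since $R\mapsto v_R$ pushes $dR$ to the uniform measure on the sphere $S^{2N-1}\subset\C^N$, the claim reduces to the rigidity statement: if $|F|^2$ is equimeasurable on $S^{2N-1}$ with $|\langle v, w\rangle|^{2M}$, then $F=c\,\ell^M$ for some linear form $\ell$ and constant $c$.

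Finally I would prove this rigidity by analysing the distribution function near zero. Writing the zero divisor of $F$ as $\sum_i m_i [G_i]$ with the $G_i$ distinct irreducible homogeneous factors, a tube/coarea estimate around the smooth locus of each component $\{G_i=0\}$ (of real codimension two in $S^{2N-1}$) shows $|\{|F|^2<t\}|\sim c\,t^{1/m_{\max}}$ as $t\to 0^+$, where $m_{\max}=\max_i m_i$, the leading contribution coming from the factors of highest multiplicity. For $u_0$ the corresponding exponent is $1/M$, so equimeasurability forces $m_{\max}=M$; since $\deg F=M$, the only possibility is a single linear factor of multiplicity $M$, i.e. $F=c\,\ell^M$. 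The normalisation $\|F\|=1$ then identifies $\rho$ with the coherent state attached to $\ell$, completing the nontrivial implication, while the converse is immediate from the $SU(N)$-invariance of the Haar measure, which makes every coherent-state Husimi function a translate of $u_0$ and hence yields equality in \eqref{eq:Lieb-Solovej inequality}. I expect the main obstacle to be this last step: making the near-zero asymptotics of $|\{|F|^2<t\}|$ rigorous and uniform requires controlling the geometry of $\{F=0\}\subset\mathbb{CP}^{N-1}$ near its singular locus and near the mutual intersections of the components, so as to ensure that these lower-dimensional strata contribute only to lower order.
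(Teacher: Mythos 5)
Your first two steps are correct, and they form a genuinely different route to the key intermediate fact. Where the paper proves that equality forces $T:=\sup u=1$ by an elementary truncation argument (Lemma \ref{lem lemma 23}: split $\Phi=\Phi_1+\Phi_2$ with $\Phi_1$ affine above $T$, apply \eqref{eq:Lieb-Solovej inequality} to the convex function $\Phi_1$, and observe that the remainder $\int_T^1(\Phi'(t)-\Phi'_-(T))\mu_0(t)\,dt$ is strictly positive when $T<1$), you instead invoke the convex order $\preceq_{\mathrm{cx}}$, Strassen's martingale coupling, and conditional Jensen to conclude full equimeasurability of $u$ and $u_0$. This is heavier machinery but proves more than the paper needs at that stage, and it is sound: the laws live on $[0,1]$, so integrability is not an issue, and strict convexity does force $Y=X$ a.s.\ in the coupling. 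Your step 2 (ess sup $u=1$, continuity and compactness of $SU(N)$ give attainment, $u\le\lambda_{\max}(\rho)\le 1$ forces $\rho=|\psi\rangle\langle\psi|$) is also correct. But at this exact point you are one line from the conclusion and you do not take it: you have a point $R^*$ with $|\langle \otimes^M R^* v_0|\psi\rangle|^2=u(R^*)=1$ and both vectors are unit vectors, so equality in Cauchy--Schwarz gives $\psi=e^{i\theta}\otimes^M R^* v_0$, i.e.\ $\rho$ is a coherent state. This is precisely the paper's endgame (implication \eqref{eq implicazione T=1}, where orthogonality of the $\psi_j$'s rules out mixed states --- a case you have already excluded via $\lambda_{\max}(\rho)=1$).

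Instead, your third step replaces this one-line argument with a rigidity claim about zero divisors, and that step, as written, has a genuine gap --- indeed the asserted asymptotics is false. It is not true in general that $|\{|F|^2<t\}|\sim c\,t^{1/m_{\max}}$: for $F=z_1z_2$ (so $m_{\max}=1$) the intersection point of the two lines contributes $\sim c\,t\log(1/t)$, which dominates the $\sim c\,t$ contribution of the smooth locus; and for an irreducible cuspidal divisor such as $z_1^2z_3-z_2^3$ (again $m_{\max}=1$) the singular point changes the exponent itself, the correct one being governed by the log-canonical threshold ($5/6$ for the cusp, not $1$). So the ``lower-dimensional strata'' you hope contribute only to lower order can in fact alter or dominate the leading behaviour, and making the rigidity argument rigorous would require resolution-of-singularities/log-canonical-threshold input, not just a tube/coarea estimate --- exactly the obstacle you flag yourself. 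The repair, however, costs nothing: delete step 3 entirely and finish with Cauchy--Schwarz as above. With that replacement your proof is complete, and is a legitimate alternative to the paper's; the trade-off is that the paper's truncation lemma is elementary and self-improving (it yields the stability estimate \eqref{eq:stability estimate} and, later, the sharp quantitative bound of Theorem \ref{teo main2}), whereas your coupling argument yields the stronger intermediate conclusion of equimeasurability but no quantitative information.
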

    
    In Section \ref{sec 3} we rephrase the Lieb-Solovej inequality \eqref{eq:Lieb-Solovej inequality}, and the above characterization of extremizers, in terms of concentration of holomorphic polynomials in $\C^{N-1}$.
 
Our uniqueness proof relies on the inequality \eqref{eq:Lieb-Solovej inequality}, as a tool. Nevertheless, our proof strategy is independent of the proof of the estimate \eqref{eq:Lieb-Solovej inequality} by Lieb and Solovej, which was based on a limiting argument (and hence did not lend itself to yield uniqueness). Therefore, our approach can hopefully be adapted to other contexts as well. In truth, a careful reader should easily infer the following somewhat surprising principle from the proof. \par\medskip\noindent 
{\bf Principle 1.} {\em
For Wehrl-type entropy bounds, from the sharp inequality for a sufficiently large class of convex functions, one automatically obtains the characterization of the cases of equality (and even a weak form of stability).}\par\medskip

In Section \ref{sec 4} (see Theorem \ref{teo main2}) we show that the same idea used in the proof of the uniqueness leads to a {\em sharp} quantitative form of  Lieb-Solovej inequality \eqref{eq:Lieb-Solovej inequality}. Again, from the proof, we can extract a principle that reads as follows. \par\medskip\noindent 
{\bf Principle 2.} {\em For Wehrl-type entropy bounds, stability is a consequence of the sharp inequality, combined with a suitable upper bound for the measure of small super-level sets of Husimi functions.}\par\medskip 
 We notice that stability results for concentration inequalities have been recently considered by several authors, in particular \cite{GGRT} and \cite{FNT} for Glauber coherent states (in the local and global form, respectively), \cite{GFOC} for holomorphic polynomials in one variable, and \cite{GKMR} for wavelet coherent states (local version). We emphasize that our approach also applies in all these contexts. 
    
  Then, as a further illustration of Principle 1, in Section \ref{sec 5} we consider the contractive inequality for weighted Bergman spaces on the unit ball of $\C^n$. This inequality, in dimension $n=1$, was originally proposed as a conjecture in \cite{LS21} and was subsequently solved (always in dimension $n=1$) in \cite{kulikov22}. In higher dimension, this problem is still open. However, we will prove that, also in this framework, the characterization of the extremizers follows for free once the inequality is obtained.  
\section{Characterization of the extremizers}\label{sec 2} 
	We adopt the notation from \cite{lieb_solovej_SU(N)}. Fix integers $N \geq 2$, $M \geq 1$ and consider the Hilbert space of totally symmetric tensor products of $N$-dimensional complex space, that is $\H_M \coloneqq ( \bigotimes^M \C^N )_{\text{sym}}$. Note that $\H_M$ is the image of $\bigotimes^M \C^N$ under the projection
	\begin{equation*}
		P_M (v_1 \otimes \cdots \otimes v_M) = \dfrac{1}{M!} \sum_{\sigma \in S_M} v_{\sigma(1)} \otimes \cdots \otimes v_{\sigma(M)},
	\end{equation*}
	where $S_M$ denotes the permutation group over $\{1,\ldots, M\}$. We denote by $\langle \psi|\phi\rangle$ the inner product of $\psi,\phi\in\H_M$, with the agreement that
	it is linear in the second argument. In the sequel, for greater clarity the dimension of
	$\H_M$ will be denoted by $\dim(\H_M)$, since its explicit form
	$\binom {N+M-1}{N-1}$ will not be needed.
	
	  We consider the representation of the group $SU(N)$ on $\bigotimes^M \C^N$ given by
	\begin{equation*}
		\begin{aligned}
			R(v_1 \otimes \cdots \otimes v_M) = (Rv_1) \otimes \cdots \otimes (Rv_M),\qquad R\in SU(N).
		\end{aligned}
	\end{equation*}
	Then $\H_M$ is an invariant subspace and the restriction of this representation to $\H_M$ is irreducible (see, e.g., \cite[Appendix A]{lieb_solovej_SU(N)}). 
	
	Let $\rho$ be a density operator (or density matrix) on $\H_M$, say
	\begin{equation}\label{eq:mixed state}
		\rho = \sum_{j=1}^{n} p_j |\psi_j \rangle \langle \psi_j |
	\end{equation}
	where $1 \leq n\leq \dim(\H_M)$,  $0 < p_j \leq 1$,  $\sum_{j=1}^n p_j = 1$ and the family $\{\psi_j\}_{j=1}^n \subset \H_M$ is such that $\langle \psi_j | \psi_k \rangle = \delta_{j,k}$. If there is just a single term in the sum, then $\rho$ is the projector onto a normalized vector $|\psi\rangle$ and is called a \emph{pure state}. In particular, if $|\psi\rangle = | \otimes^M v \rangle$ for some $v \in \C^N$, $|v|=1$, then $\rho$ is called a (symmetric) \emph{coherent state}.
	
	For a fixed vector $v_0 \in \C^N$, $|v_0|=1$, we consider the Husimi function $u \colon SU(N) \to \R$ associated with the density operator $\rho$,  defined as
	\begin{equation}\label{eq def husimi}
		u(R) \coloneqq \langle \otimes^M Rv_0 | \rho | \otimes^M Rv_0 \rangle = \sum_{j=1}^n p_j | \langle \otimes^M Rv_0 | \psi_j \rangle |^2.
	\end{equation}
	This is the function that is considered in the Lieb--Solovej inequality \eqref{eq:Lieb-Solovej inequality}. We observe that
    \[
    0\leq u(R)\leq 1\qquad \forall R\in SU(N)
    \]
    and (as is well known, cf. \cite[(10)]{lieb_solovej_SU(N)})
\begin{equation}\label{eq:integral of Husimi function}
		\dim (\H_M) \int_{SU(N)} u(R) \, dR = 1,
	\end{equation}
    where $dR$ denotes the probability Haar measure of $SU(N)$.
    
    \begin{remark}
    (i) The integral on the right-hand side of \eqref{eq:Lieb-Solovej inequality} can be written more explicitly (see \eqref{eq mu0} below) as 
    \begin{align*}
\int_{SU(N)} \Phi( u_0(R) ) \, dR&=\int_0^1 \Phi\big((1-s^{1/(N-1)})^M\big) \, ds
\\ &=\lim_{t\to0^+} \Phi(t)+\int_0^1 \Phi'(t) \big(1-t^{1/M}\big)^{N-1}\, dt;
    \end{align*}
    (we will see that the Husimi function $u$ achieves the values $0$ and $1$ on subsets of measure $0$, so that the values --- and the possible discontinuity --- of $\Phi$ at $0$ and $1$ are irrelevant). 
    \par\medskip
    (ii) Multiplying $\psi\in \H _M$, with $\|\psi\|=1$, by a phase factor, does not affect the density operator $
    \rho=|\psi\rangle \langle \psi|$. Hence, in the case of pure states, the integral on the left-hand side of \eqref{eq:Lieb-Solovej inequality} can be regarded as a function $G:P(\H_M)\to\mathbb{R}$ where $P(\H_M)$ is the projective space over $\H_M$, that is 
    \[
    G([\psi])=\int_{SU(N)} \Phi(|\langle \otimes^M Rv_0|\psi\rangle|^2)\, dR,\qquad [\psi]\in P(\H_M),
    \]
    (with the agreement that  $\|\psi\|=1$).  
    Moreover, $G$ is invariant under the action of the group $PSU(N):=SU(N)/\{\lambda_k I: k=0,\ldots,N-1\}$, with $\lambda_k=e^{2\pi ik/N}$.  
    Hence, Theorems \ref{th:Lieb-Solovej inequality} and \ref{th:main theorem} tell us that $G$ achieves its maximum only at the points of the orbit \[
\{[\otimes ^M v]:\ v\in\C^{N},\ |v|=1\}\subset P(\H_M). 
\]
This orbit is a famous rational algebraic variety, known as the Veronese variety; cf. \cite[Section 11.3]{fulton_harris}.
\end{remark}
We can now prove Theorem \ref{th:main theorem}. \par\medskip\noindent 
\textit{Proof of Theorem \ref{th:main theorem}.}
	The fact that coherent states achieve equality in \eqref{eq:Lieb-Solovej inequality} is already contained in Theorem \ref{th:Lieb-Solovej inequality}.
	To prove that equality occurs only for such states we rely on the following lemma. We denote by $|A|$ the (normalized Haar) measure of a Borel subset $A\subset SU(N)$.
	\begin{lem}\label{lem lemma 23}
		Let $\Phi \colon [0,1] \to \R$ be a convex function, let $\rho$ be a density operator on $\H_M$ with Husimi function $u$, and let $u_0$ be the Husimi function of a coherent state. Moreover, let $\mu(t) = |\{u > t\}|$ and $\mu_0(t) = |\{u_0 > t\}|$ ($t\geq0$) denote the corresponding distribution functions,  and let
		\[
		T := \sup_{R \in SU(N)} u(R).
		\]
		Then, the following stability inequality
		\begin{equation}\label{eq:stability estimate}
			\int_{SU(N)} \Phi(u_0(R)) \, dR - \int_{SU(N)} \Phi(u(R)) \, dR \geq \int_T^1 \left(\Phi'(t) - \Phi'_-(T)\right) \mu_0(t) \, dt
		\end{equation}
		holds, where $\Phi'_-$ denotes the left derivative of $\Phi$.
	\end{lem}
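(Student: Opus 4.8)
The plan is to convert the Lieb--Solovej inequality \eqref{eq:Lieb-Solovej inequality} into a majorization statement between the distribution functions $\mu$ and $\mu_0$, combine it with the layer-cake representation of the entropy functionals, and then run an integration by parts that exploits the convexity of $\Phi$ together with the fact that $u\le T$. First I would record the layer-cake identity: since a convex $\Phi$ is absolutely continuous on $(0,1)$ and $\{u=0\}$, $\{u=1\}$ are negligible, for any Husimi function $w$ with distribution $\nu(t)=|\{w>t\}|$ Tonelli's theorem gives
\[
\int_{SU(N)}\Phi(w)\,dR=\lim_{a\to0^+}\Phi(a)+\int_0^1\Phi'(t)\,\nu(t)\,dt .
\]
(Here $\Phi'\in L^1(0,1)$ because $\Phi$ is finite and convex on $[0,1]$, and $0\le\nu\le1$, so the integral converges absolutely.) Subtracting the instances $w=u_0$ and $w=u$, the boundary constants cancel and
\[
\int_{SU(N)}\Phi(u_0)\,dR-\int_{SU(N)}\Phi(u)\,dR=\int_0^1\Phi'(t)\,\big(\mu_0(t)-\mu(t)\big)\,dt .
\]

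Second, I would extract the majorization from \eqref{eq:Lieb-Solovej inequality}. Applying it to the convex functions $\Phi_c(x)=(x-c)_+$ and using $\int_{SU(N)}(w-c)_+\,dR=\int_c^1\nu(t)\,dt$ yields $\int_c^1\mu(t)\,dt\le\int_c^1\mu_0(t)\,dt$ for every $c\in[0,1]$. Hence the function
\[
F(t):=\int_t^1\big(\mu_0(s)-\mu(s)\big)\,ds
\]
is non-negative on $[0,1]$, while $F(0)=0$ by the equal-mass identity \eqref{eq:integral of Husimi function}. Moreover $\mu(t)=0$ for $t>T$, so $F(T)=\int_T^1\mu_0(t)\,dt$.

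With these ingredients the stability inequality \eqref{eq:stability estimate} is equivalent, after splitting $\int_0^1=\int_0^T+\int_T^1$ and using $\mu\equiv0$ on $(T,1)$, to the single claim
\[
A:=\int_0^T\Phi'(t)\,\big(\mu_0(t)-\mu(t)\big)\,dt+\Phi'_-(T)\,F(T)\ge0 .
\]
I would then evaluate $A$ by a Fubini argument. Writing, for a.e.\ $t\in(0,T)$, $\Phi'(t)=\Phi'_-(T)-\int_{(t,T)}d\Phi'(s)$, where $d\Phi'\ge0$ is the Stieltjes measure of the non-decreasing function $\Phi'$, and using $\int_0^s(\mu_0-\mu)=-F(s)$ (which follows from $F(0)=0$), an exchange of the order of integration gives
\[
\int_0^T\Phi'(t)\,\big(\mu_0(t)-\mu(t)\big)\,dt=-\Phi'_-(T)\,F(T)+\int_{(0,T)}F(s)\,d\Phi'(s).
\]
The two terms containing $\Phi'_-(T)F(T)$ cancel, leaving $A=\int_{(0,T)}F\,d\Phi'\ge0$, since $F\ge0$ and $d\Phi'\ge0$; this establishes \eqref{eq:stability estimate}.

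The delicate point is this last step. Because $\Phi$ is merely convex, its second derivative is only a non-negative measure that may carry an atom at $T$, so one must keep careful track of one-sided derivatives. The appearance of the \emph{left} derivative $\Phi'_-(T)$ is precisely what makes the boundary contributions cancel: a surviving atom of $d\Phi'$ at $T$ is exactly compensated, so that the remaining integral is over the open interval $(0,T)$. This bookkeeping, together with the justification that all the integrals above converge (via $\Phi'\in L^1(0,1)$ and $0\le\mu,\mu_0\le1$), is the main thing to get right; the two inputs---the layer-cake identity and the majorization $F\ge0$ with $F(0)=0$---are otherwise routine.
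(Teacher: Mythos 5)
Your argument is correct, but it takes a genuinely different route from the paper's. The paper works at the level of the convex function itself: it splits $\Phi=\Phi_1+\Phi_2$, where $\Phi_1$ agrees with $\Phi$ on $[0,T]$ and continues affinely with slope $\Phi'_-(T)$ on $(T,1]$, applies \eqref{eq:Lieb-Solovej inequality} \emph{once} to the convex function $\Phi_1$, and then observes that $\Phi_2(u)\equiv 0$ (because $u\le T$) while $\int \Phi_2(u_0)\,dR=\int_T^1(\Phi'(t)-\Phi'_-(T))\mu_0(t)\,dt$; this gives \eqref{eq:stability estimate} in a few lines with no measure-theoretic machinery. You instead pass entirely to distribution functions: from \eqref{eq:Lieb-Solovej inequality} applied to the one-parameter family of ``angle'' functions $(x-c)_+$ you extract the majorization $F(t)=\int_t^1(\mu_0-\mu)\,ds\ge 0$ with $F(0)=0$, and then reassemble a general convex $\Phi$ from these via the Stieltjes measure $d\Phi'$ and a Fubini exchange. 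This is precisely the Hardy--Littlewood--P\'olya majorization viewpoint that the paper itself invokes later, in Section \ref{sec 3}, for the local estimates. Your route is longer and needs more bookkeeping, but it buys something the paper's proof discards: an exact formula for the deficit,
\[
\int_{SU(N)}\Phi(u_0)\,dR-\int_{SU(N)}\Phi(u)\,dR=\int_T^1\left(\Phi'(t)-\Phi'_-(T)\right)\mu_0(t)\,dt+\int_{(0,T)}F(s)\,d\Phi'(s),
\]
in which both terms are manifestly non-negative, whereas the paper only retains the inequality obtained by dropping $\int\Phi_1(u_0)-\int\Phi_1(u)\ge 0$. Two small points to patch. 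First, when $T=1$ your identity $\Phi'(t)=\Phi'_-(T)-d\Phi'((t,T))$ is meaningless if $\Phi'_-(1)=+\infty$ (e.g.\ $\Phi(t)=-\sqrt{1-t}$); dispose of that case at the outset, as the paper does, by noting that for $T=1$ the right-hand side of \eqref{eq:stability estimate} vanishes and the claim reduces to \eqref{eq:Lieb-Solovej inequality}. (For $T<1$ one always has $\Phi'_-(T)<\infty$, so your manipulation is legitimate there.) Second, your Fubini exchange requires $\int_{(0,T)} s\,d\Phi'(s)<\infty$, which is not automatic from finiteness of $d\Phi'$ on compacta since $d\Phi'$ may have infinite mass near $0$; it does follow from $\Phi'\in L^1(0,1)$ via Tonelli, so your parenthetical justification is adequate, but it deserves a line of proof.
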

	
	\begin{proof}
		When $T=1$ \eqref{eq:stability estimate} becomes the inequality \eqref{eq:Lieb-Solovej inequality}, so we can assume that $0<T<1$. We consider the decomposition $\Phi(t) = \Phi_1(t) + \Phi_2(t)$, where
		\begin{equation*}
			\Phi_1(t) = \begin{cases}
				\Phi(t) & 0 <t \leq T \\
				\Phi'_-(T)(t-T) + \Phi(T) & T < t \leq 1.
			\end{cases}
		\end{equation*}
		Then, by applying  $\eqref{eq:Lieb-Solovej inequality}$ to
		$\Phi_1$, we have
		\begin{align*}
			& \int_{SU(N)} \Phi(u_0(R)) \, dR - \int_{SU(N)} \Phi(u(R)) \, dR \\
			=& \int_{SU(N)} \Phi_1(u_0(R)) \, dR - \int_{SU(N)} \Phi_1(u(R)) \, dR \\
			+& \int_{SU(N)} \Phi_2(u_0(R)) \, dR - \int_{SU(N)} \Phi_2(u(R)) \, dR \\
			\geq& \int_{SU(N)} \Phi_2(u_0(R)) \, dR - \int_{SU(N)} \Phi_2(u(R)) \, dR  & \\
			=& \int_T^1 (\Phi'(t) - \Phi'_-(T)) \mu_0(t) \, dt - \int_T^1 (\Phi'(t) - \Phi'_-(T)) \mu(t) \, dt \\
			=& \int_T^1 (\Phi'(t) - \Phi'_-(T)) \mu_0(t) \, dt, &
		\end{align*}
		where in the last passage we have used the fact that $\mu(t)=0$ when $t\in (T,1]$.
	\end{proof}	
	Now, consider the right-hand side of \eqref{eq:stability estimate}. If $T<1$ and $\Phi$ is strictly convex, we have $\Phi'(t) - \Phi'_-(T) > 0$ in $(T,1)$ and also $\mu_0(t) > 0$ in the same interval, therefore $T<1$ implies
	\begin{equation*}
		\int_{SU(N)} \Phi(u_0(R)) \, dR > \int_{SU(N)} \Phi(u(R)) \, dR,
	\end{equation*}
	which means that equality can hold only if $T=1$.
	\begin{remark}
		Notice that the same argument applies, more generally, if $\Phi$ is convex and, for every $\varepsilon\in (0,1)$, $\Phi$ is not an affine function on the interval $(1-\varepsilon,1)$. Consequently, the conclusion of Theorem \ref{th:main theorem} (and Corollary \ref{cor 1} below) extend to any convex function $\Phi$ (not necessarily strictly convex) having this property.
	\end{remark}
	
	To conclude the proof of the theorem, we need to prove that
	\begin{equation}\label{eq implicazione T=1}
		T = 1 \Longrightarrow\rho = |\otimes^M v \rangle \langle \otimes^M v |
	\end{equation}
	for some $v \in \C^N$, $|v|=1$.
    To this end, with the notation in \eqref{eq:mixed state}, we observe that if there exists $R\in SU(N)$ such that 
    \[
    \sum_{j=1}^n p_j | \langle \otimes^M Rv_0 | \psi_j \rangle |^2=1
    \]
    then
    \[
|\langle \otimes^M Rv_0 | \psi_j \rangle |=1
    \]
   for every $j=1,\ldots,n$, and therefore
    \[
\psi_j =e^{i\theta_j} \otimes^M Rv_0
    \]
    for some $\theta_j\in\R$. Since the $\psi_j's$ are pairwise orthogonal, it follows that $n=1$ and therefore $\rho$ is a coherent state. This concludes the proof of Theorem \ref{th:main theorem}. 

    \begin{remark} As the reader will have noticed, the above argument is very general and shows that, for Wehrl-type entropy bounds, the uniqueness of the extremizers (for strictly convex functions) follows essentially from the bound itself once the latter is known to hold \textit{for every convex function}. 
    \end{remark} 

    \section{Applications to the maximum concentration of holomorphic polynomials}\label{sec 3}
Theorems \ref{th:Lieb-Solovej inequality} and \ref{th:main theorem} have an interesting consequence on the maximum concentration of holomorphic polynomials  in $\C^{N-1}$. The connection with complex analysis is provided by the fact that the natural classical phase space associated with $\H_M$ is the complex projective space $\C P^{N-1}$, which is a complex manifold. In this section we discuss this application because of its great intrinsic interest, and also as a preparation to the stability analysis in the next section, that --- unlike the above proof of Theorem \ref{th:main theorem}  --- will exploit the complex structure of $\C P^{N-1}$ (through Lemma \ref{lem lemma paolo} below).

\subsection{Global estimates}	The starting point of this discussion is the observation that the Husimi function defined in \eqref{eq def husimi} is constant on the fibers of the fibrations
        \begin{equation*}
		\begin{aligned}
			SU(N)          &\longrightarrow \phantom{||||||||||||} S^{2N-1}                    &\longrightarrow & \phantom{||||||}\C P^{N-1} \\
			R \phantom{aa} &\longmapsto     Rv_0 = (z_1, \dots, z_N)    &\longmapsto     &\, [z_1 \colon \cdots \colon z_N],
		\end{aligned}
	\end{equation*}
	which are $SU(N-1)$ and $U(1)$, respectively.
	Hence the Husimi function can be regarded as a function on $S^{2N-1}$ (the $2N-1$ dimensional real sphere in $\C^N$) or even on the complex projective space $\C P^{N-1}$.
	As a consequence, we have the following equalities of integrals with respect to normalized $SU(N)$-invariant measures
	\begin{equation*}
		\int_{SU(N)} \Phi (u(R)) \, dR = \int_{S^{2N-1}} \Phi (u(v)) \, dv = \int_{\C P^{N-1}} \Phi(u(z)) \, dz,
	\end{equation*}
	where $\Phi \colon [0,1] \to \R$ is an arbitrary Borel function for which the integrals make sense (with some abuse of notation, we denote the Husimi function with the same symbol $u$, regardless of the domain being considered).
	
	To make the last integral more explicit, we consider the affine chart of $\C P^{N-1}$ where $z_1 \neq 0$ (homogeneous coordinates), with coordinates $z' = (z_2/z_1,\dots , z_N / z_1) \in \C^{N-1}$. We have
	\begin{align*}
		\int_{\C P^{N-1}} \Phi(u(z)) \, dz &= \int _{\C P^{N-1}} \Phi \left( \sum_{j=1}^n p_j \lvert \langle \otimes^M \frac{z}{|z|} | \psi_j \rangle \rvert^2 \right)\, dz\\
		&= \int_{\C^{N-1}} \Phi \left( \frac{\sum_{j=1}^n p_j |\langle \otimes^M (1,z') | \psi_j \rangle|^2}{(1+|z'|^2)^M} \right) d\nu(z'),
	\end{align*}
	where
	\[
	d \nu(z') := \frac{c_N}{(1+|z'|^2)^{N}} \, dA(z')
	\]
	is a probability measure on $\C^{N-1}$, with $c_N=\frac{(N-1)!}{\pi^{N-1}}$ (here $dA(z')$ is the Lebesgue measure on $\C^{N-1}$). We point out that $\langle \otimes^M (1,z') | \psi_j \rangle$ is an antiholomorphic polynomial of degree at most $M$ in $\C^{N-1}$.

	Suppose now that $\rho$ is a pure state. Then its Husimi function, in the chart $\C^{N-1}$, has the form
	\[
	\frac{|F(z')|^2}{(1+|z'|^2)^M}
	\]
	for some holomorphic polynomial $F$ of degree at most $M$ in $\C^{N-1}$. Moreover, it is easy to see that every  polynomial of that type, suitably normalized, occurs in this way.  Observe that $F(z')=e^{i\theta}$, for some $\theta\in\R$, if $\rho=| \otimes^M v\rangle \langle \otimes^M v|$ with $v=(1,0,\ldots,0)$. More generally, the Husimi function of the state $| \otimes^M v\rangle \langle \otimes^M v|$, with $v\in\C^N$, $|v|=1$, is given by the function
		\begin{equation}\label{eq uzero}
		u_0(z)=\frac{|\langle v|z\rangle_{\C^{N}}|^{2M}}
		{|z|^{2M}}.
	\end{equation}
	This suggests considering the reproducing kernel Hilbert space $\P_M$ of holomorphic polynomials $F(z')$ of degree at most $M$ in $\C^{N-1}$, equipped with the norm
	\begin{equation*}
		\|F\|_{\P_M}^2 \coloneqq \dim(\H_M) \int_{\C^{N-1}} \dfrac{|F(z')|^2}{(1+|z'|^2)^M} \, d\nu(z')=\dim(\H_M) \int_{\C P^{N-1}}u(z)\, dz,
	\end{equation*}
	where we set
	\begin{equation}\label{eq def f}
		u(z)=\frac{|F(z')|^2}{(1+|z'|^2)^M}=\frac{|z_1^M F(z_2/z_1,\ldots, z_N/z_1)|^2}{|z|^{2M}}.
	\end{equation}
	Observe that $\|1\|_{\P_M} = 1$, as follows immediately from \eqref{eq:integral of Husimi function}.

	 Theorems \ref{th:Lieb-Solovej inequality} and \ref{th:main theorem} then clearly lead to the following result (the case $N=2$ was addressed in \cite{frank_sharp_inequalities,ortega,lieb_solovej_SU(2)}). 
\begin{corollary}\label{cor 1}
        Let $F\in\P_M$, with $\|F\|_{\P_M}=1$. Then, for every convex function $\Phi:[0,1]\to\R$
we have 
\begin{equation}\label{eq cor 3.1}
\int_{\C^{N-1}} \Phi \left( \frac{ |F(z')|^2}{(1+|z'|^2)^M} \right) d\nu(z')\leq \int_{\C^{N-1}} \Phi \left( \frac{ 1}{(1+|z'|^2)^M} \right) d\nu(z').
\end{equation}
Moreover, if $\Phi$ is strictly convex, equality occurs if and only if 
\[
F(z')=\langle v| (1,z')\rangle_{\C^N}^M,\qquad z'\in\C^{N-1}
\]
for some $v\in\C^N$, with $|v|=1$. 
\end{corollary}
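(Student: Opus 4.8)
The plan is to read the statement off from Theorems~\ref{th:Lieb-Solovej inequality} and~\ref{th:main theorem} through the dictionary set up above, the only real work being to match normalizations and to track a phase ambiguity in the equality case. First I would record that the correspondence described before the statement is a bijection between normalized polynomials and pure states: given $F\in\P_M$ one takes $\psi\in\H_M$ to be the vector (unique up to a phase) with $\overline{\langle\otimes^M(1,z')|\psi\rangle}=F(z')$, so that the Husimi function of $\rho=|\psi\rangle\langle\psi|$ is exactly $u(z)=|F(z')|^2/(1+|z'|^2)^M$ as in~\eqref{eq def f}. By~\eqref{eq:integral of Husimi function} and the definition of $\|\cdot\|_{\P_M}$, the condition $\|F\|_{\P_M}=1$ is equivalent to $\|\psi\|=1$, i.e.\ to $\rho$ being a genuine density operator, and conversely every pure state arises in this way. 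Thus, rewriting the left-hand side of~\eqref{eq cor 3.1} via the chain of identities $\int_{\C^{N-1}}\Phi(u)\,d\nu=\int_{\C P^{N-1}}\Phi(u)\,dz=\int_{SU(N)}\Phi(u(R))\,dR$, it coincides with the left-hand side of~\eqref{eq:Lieb-Solovej inequality}.

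Next I would identify the right-hand side of~\eqref{eq cor 3.1} as the contribution of a coherent state. Taking $F\equiv1$ (which satisfies $\|1\|_{\P_M}=1$) produces the integrand $1/(1+|z'|^2)^M$, and by~\eqref{eq uzero} with $v=v_0=(1,0,\dots,0)$ this is precisely the Husimi function $u_0$ of the coherent state $|\otimes^M v_0\rangle\langle\otimes^M v_0|$ read in the chart $z'$: indeed $\langle v_0|z\rangle=z_1$, so $|\langle v_0|z\rangle|^{2M}/|z|^{2M}=|z_1|^{2M}/|z|^{2M}=(1+|z'|^2)^{-M}$. Hence the right-hand side of~\eqref{eq cor 3.1} equals $\int_{SU(N)}\Phi(u_0(R))\,dR$, and the inequality~\eqref{eq cor 3.1} for an arbitrary convex $\Phi$ is exactly~\eqref{eq:Lieb-Solovej inequality} of Theorem~\ref{th:Lieb-Solovej inequality} applied to the pure state $\rho$.

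For the equality case I would invoke Theorem~\ref{th:main theorem}: when $\Phi$ is strictly convex, equality forces $\rho=|\otimes^M v\rangle\langle\otimes^M v|$ for some $v\in\C^N$ with $|v|=1$. It then remains to translate this back to $F$. By~\eqref{eq uzero} the Husimi function of this coherent state is $|\langle v|z\rangle|^{2M}/|z|^{2M}$, which after factoring $z=z_1(1,z')$ becomes $|\langle v|(1,z')\rangle|^{2M}/(1+|z'|^2)^M$. Comparing with~\eqref{eq def f} gives $|F(z')|^2=|\langle v|(1,z')\rangle^M|^2$, so $F$ and $\langle v|(1,z')\rangle^M$ differ only by a unimodular constant; absorbing that constant into $v$ (replacing $v$ by $e^{-i\beta/M}v$, which preserves $|v|=1$) yields precisely $F(z')=\langle v|(1,z')\rangle^M$. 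Conversely, any such $F$ is the image of a coherent state and hence achieves equality, by Theorem~\ref{th:Lieb-Solovej inequality}.

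The argument is essentially bookkeeping, so the only points that require genuine care are the surjectivity of the polynomial-to-state correspondence together with the matching of norms, and the phase ambiguity just discussed; the substantive content lies entirely in Theorems~\ref{th:Lieb-Solovej inequality} and~\ref{th:main theorem}, which I am free to use.
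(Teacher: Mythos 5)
Your proof is correct and is essentially the paper's own argument: the paper derives Corollary \ref{cor 1} from Theorems \ref{th:Lieb-Solovej inequality} and \ref{th:main theorem} through exactly the dictionary you describe (pure states $\leftrightarrow$ normalized polynomials in $\P_M$, with the Husimi function read in the affine chart, the norm matching via \eqref{eq:integral of Husimi function}, and \eqref{eq uzero} identifying coherent states with $F(z')=\langle v|(1,z')\rangle_{\C^N}^M$). Your explicit handling of the phase ambiguity by replacing $v$ with $e^{-i\beta/M}v$ is a nice touch that the paper leaves implicit.
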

We conclude this section with the following remark. The implication \eqref{eq implicazione T=1} --- which is in fact an equivalence --- can be rephrased as a property of the polynomials $F$ in $\P_M$ as follows. We provide a proof that prescinds from the fact that $F$ comes from the Husimi function of a pure state in $\H_M$. 
	\begin{prop}\label{lem 25}
		Let $F\in \P_M$ and let $u(z)$ be the corresponding function in \eqref{eq def f}. Assume that $\|F\|_{\P_M}=1$ or, equivalently, that
		\begin{equation}\label{eq norm}
			\dim(\H_M) \int_{\C P^{N-1}}u(z)\, dz=1.
		\end{equation}
		Then
		\begin{equation}\label{eq stima uv}
			u(v)\leq 1\qquad \forall v\in\C P^{N-1}.
		\end{equation}
		Moreover, equality occurs at a point of $\C P^{N-1}$ represented by $v\in \C^N$, $|v|=1$, if and only if
		\[
		u(z)=\frac{|\langle v|z\rangle_{\C^{N}}|^{2M}}
		{|z|^{2M}}.
		\]
	\end{prop}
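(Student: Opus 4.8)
The plan is to read \eqref{eq stima uv} as a Cauchy--Schwarz inequality in the reproducing kernel Hilbert space $\P_M$, and the equality case as the corresponding equality case. Writing $P(z)=z_1^M F(z_2/z_1,\ldots,z_N/z_1)$ for the homogenization of $F$, so that $u(z)=|P(z)|^2/|z|^{2M}$ as in \eqref{eq def f}, the normalization \eqref{eq norm} reads $\dim(\H_M)\int_{S^{2N-1}}|P|^2\,d\sigma=1$, where $d\sigma$ is the ($SU(N)$-invariant) surface measure; this is the form of the $\P_M$-norm I will use, since $|P(z)|^2/|z|^{2M}$ is constant on the $U(1)$-fibers.

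First I would reduce everything to the single point $[1:0:\cdots:0]$, i.e. $z'=0$, by invariance. The map $P\mapsto P\circ U$, for $U\in SU(N)$, sends a homogeneous polynomial of degree $M$ to another one, preserves the $\P_M$-norm (because $d\sigma$ is $SU(N)$-invariant), and satisfies $u_{P\circ U}(z)=u_P(Uz)$. Since $SU(N)$ acts transitively on $\C P^{N-1}$, given any $v\in\C^N$, $|v|=1$, I can pick $U$ with $Ue_1=v$ and set $Q=P\circ U$; then $u_P(v)=u_Q(e_1)=|Q(e_1)|^2=|G(0)|^2$, where $G$ is the dehomogenization of $Q$ and $\|G\|_{\P_M}=1$. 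Thus it suffices to bound $|G(0)|^2$ and to characterize equality at the base point.

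At the base point the key elementary fact is the mean value identity $G(0)=\langle 1|G\rangle_{\P_M}$. Indeed, $\langle 1|G\rangle_{\P_M}=\dim(\H_M)\int_{\C^{N-1}}G(z')(1+|z'|^2)^{-M}\,d\nu(z')$, and by the torus invariance of $d\nu$ every monomial $z'^\alpha$ with $\alpha\neq0$ integrates to zero, so only the constant term survives; together with $\|1\|_{\P_M}=1$ this gives exactly $G(0)$. Cauchy--Schwarz then yields
\[
u_P(v)=|G(0)|^2=\big|\langle 1|G\rangle_{\P_M}\big|^2\le \|1\|_{\P_M}^2\,\|G\|_{\P_M}^2=1,
\]
which is \eqref{eq stima uv}. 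Moreover equality forces $G=\lambda$ constant with $|\lambda|=1$, hence $Q(z)=\lambda z_1^M$, and unwinding $Q=P\circ U$ gives $P(z)=\lambda\langle v|z\rangle_{\C^N}^M$, because $(U^{-1}z)_1=\langle Ue_1|z\rangle=\langle v|z\rangle$. This is precisely $u(z)=|\langle v|z\rangle_{\C^N}|^{2M}/|z|^{2M}$, as in \eqref{eq uzero}. The converse implication is immediate: evaluating that expression at $z=v$ gives $u(v)=1$.

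The only real bookkeeping --- and the point I would be most careful about --- is the equality case after the change of variables: one must verify that the unitary $U$ turning $e_1$ into $v$ turns the linear form $z_1=\langle e_1|z\rangle$ into exactly $\langle v|z\rangle$, with the same vector $v$ representing the equality point, and that the $\P_M$-norm really is $SU(N)$-invariant (which is transparent once the norm is written as the sphere integral above). Should one prefer to avoid the symmetry reduction, the same conclusion follows by computing the diagonal of the reproducing kernel directly, $K(w',w')=(1+|w'|^2)^M$, via the orthogonality of the monomials $z'^\alpha$ and their Fischer-type norms in $\P_M$; this is a routine but slightly longer computation leading to the identical characterization.
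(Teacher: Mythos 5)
Your proof is correct, and while it shares the paper's outer skeleton --- both arguments reduce to the base point $[1:0:\cdots:0]$ via the transitive $SU(N)$-action, which preserves the $\P_M$-norm and satisfies $u_{P\circ U}(z)=u_P(Uz)$ --- the core estimate at the base point is handled by a genuinely different mechanism. The paper bounds $|F(0)|$ using the subharmonicity of $|F(z')|^2$ and polar coordinates (mean-value inequality on each sphere, integrated against the radial weight); equality then forces $|F|^2$ to be harmonic, and the identity $\Delta|F|^p=p^2|F|^{p-2}|\partial F|^2$ is invoked to conclude that $F$ is constant. You instead establish the exact identity $G(0)=\langle 1|G\rangle_{\P_M}$ (nonconstant monomials integrate to zero against the radial, torus-invariant weight) and apply Cauchy--Schwarz, so the rigidity is automatic: equality forces $G$ proportional to the constant $1$, hence a unimodular constant, with no harmonicity discussion needed. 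Your route is the reproducing-kernel viewpoint, and it makes the equality case cleaner; what the paper's subharmonicity argument buys is robustness beyond the Hilbert-space setting, since it applies verbatim to $|f|^p$ for any $p>0$ --- and indeed the paper reuses exactly this argument in Section \ref{sec 5} for the Bergman spaces $A^p_\alpha$, where Cauchy--Schwarz is unavailable when $p\neq 2$. Both proofs transport the equality case back to a general point in the same way, via $(U^{-1}z)_1=\langle Ue_1|z\rangle_{\C^N}=\langle v|z\rangle_{\C^N}$.
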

	\begin{proof}
		First, we address the desired estimate at the point $[1:0:\ldots:0]\in\C P^{N-1}$,
		which amounts to proving
		the estimate
		\begin{equation}\label{eq Fzero}
			|F(0)|\leq 1,
		\end{equation}
		and the fact that equality occurs if and only if $F$ is constant, namely $F(z')=e^{i\theta}$ for some $\theta\in\R$. Observe that the corresponding function $u(z)$ in \eqref{eq def f} in that case is given by $u(z)=|z_1|^{2M}/|z|^{2M}$.
		
		The proof of the inequality \eqref{eq Fzero} is standard, using the subharmonicity of $|F(z')|^2$ and the polar coordinates in $\C^{N-1}$. Also, one sees that equality can occur only if $|F|^2$ is, in fact, harmonic, that forces $F$ to be constant. This is a consequence of the equality
        \[
\Delta |F|^p=p^2|F|^{p-2}|\partial F|^2,
        \]
        which holds for every $p>0$ on the set where $F\not=0$. 
		
		Now, consider a point of $\C P^{N-1}$ represented by some $v\in\C^N$, $|v|=1$. Let $v=R(1,0,\ldots,0)$, for some $R\in SU(N)$. Then, if $u$ is a function as in the statement, we see that $u(R\, \cdot)$ has still the same form as in \eqref{eq def f} (for a new polynomial $F$ in $\P_M$), and satisfies the same normalization \eqref{eq norm}. Therefore, applying the result already proved to the function $u(R \cdot)$ we see that $u(v)\leq 1$
		and that equality occurs if and only if
		\[
		u(Rz)=\frac{|z_1|^{2M}}{|z|^{2M}},
		\]
		that is
		\[
		u(z)=\frac{|\langle v|z\rangle_{\C^{N}}|^{2M}}
		{|z|^{2M}}.
		\]
		This concludes the proof.
	\end{proof}
\subsection {Local estimates} Arguing as in \cite[Section 5] {frank_sharp_inequalities} we easily obtain a corresponding local estimate --- also known as Faber-Krahn type inequality (see \cite{frank_sharp_inequalities,ortega} for the case $N=2$, \cite{NT} for the analogous result for functions in the Fock space, and \cite{RT} for functions in Bergman spaces). 

Setting  $f(z'):=|F(z')|^2(1+|z'|^2)^{-M}$, $f_0(z'):=(1+|z'|^2)^{-M}$ and denoting by $f^\ast(s)$ and $f_0^\ast(s)$ their decreasing rearrangements on the interval $[0,1]$ (recall that the measure of $\C P^{N-1}$ is normalized to $1$) the estimate  \eqref{eq cor 3.1} is equivalent to 
\[
\int_0^1\Phi(f^\ast(\tau))\, d\tau\leq \int_0^1 \Phi(f_0^\ast(\tau))\, d\tau
\]
for every convex function $\Phi:[0,1]\to\R$. By the Hardy-Littelwood-Polya majorization theory \cite[subsections 249 and 250]{HLP}, the latter estimate is equivalent to 
\[
\int_0^s f^\ast(\tau)\, d\tau\leq \int_0^s f_0^\ast(\tau)\, d\tau\qquad \forall s\in[0,1], 
\]
which implies the following result. 
\begin{corollary}\label{cor 2}
Let $F\in\P_M$, with $\|F\|_{\P_M}=1$ and let $\Omega\subset\C^{N-1}$ be a Borel subset of measure $\nu(\Omega)>0$. Then
\begin{equation}\label{eq cor 3.2}
\int_{\Omega} \frac{ |F(z')|^2}{(1+|z'|^2)^M} \,  d\nu(z')\leq \int_{\Omega^\ast} \frac{ 1}{(1+|z'|^2)^M} \, d\nu(z'),
\end{equation}
where $\Omega^\ast$ is the Euclidean ball in $\C^{N-1}$ of center $0$ and measure $\nu(\Omega^\ast)=\nu(\Omega)$.

Moreover, equality occurs if  
\[
F(z')=\langle v| (1,z')\rangle_{\C^N}^M,\qquad z'\in\C^{N-1}
\]
for some $v\in\C^N$, with $|v|=1$, and $\Omega$ is a super-level set of the function 
\[
\C^{N-1}\ni z'\mapsto |\langle v| (1,z')\rangle_{\C^N}|^{2M}(1+|z'|^2)^{-M}
\]
(hence, regarded as a subset of $\C P^{N-1}$, $\Omega$ is a geodesic ball of center $v$).
\end{corollary}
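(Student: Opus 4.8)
The plan is to combine the majorization relation
\[
\int_0^s f^\ast(\tau)\,d\tau\leq\int_0^s f_0^\ast(\tau)\,d\tau,\qquad s\in[0,1],
\]
established above, with two elementary facts from the theory of rearrangements. First I would invoke the Hardy--Littlewood bound: for any Borel set $\Omega$ with $\nu(\Omega)=s$,
\[
\int_\Omega f\,d\nu\leq\int_0^s f^\ast(\tau)\,d\tau,
\]
which follows from the layer-cake identity $\int_\Omega f\,d\nu=\int_0^\infty \nu(\{f>t\}\cap\Omega)\,dt$ upon estimating each slice by $\min(\nu(\{f>t\}),s)$. Chaining this with the majorization gives $\int_\Omega f\,d\nu\leq\int_0^s f_0^\ast$. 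It then remains to recognize the right-hand side as $\int_{\Omega^\ast}f_0\,d\nu$: since $f_0(z')=(1+|z'|^2)^{-M}$ is radial and strictly decreasing in $|z'|$, the ball $\Omega^\ast$ of measure $s$ coincides (up to a null set) with the super-level set $\{f_0>f_0^\ast(s)\}$, whence $\int_{\Omega^\ast}f_0\,d\nu=\int_0^s f_0^\ast(\tau)\,d\tau$. This yields \eqref{eq cor 3.2}.

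For the equality statement, suppose $F(z')=\langle v| (1,z')\rangle_{\C^N}^M$, so that $f$ is the Husimi function $u_0$ of the coherent state with parameter $v$ in \eqref{eq uzero}. Since the coherent states form a single $SU(N)$-orbit and the measure on $\C P^{N-1}$ is $SU(N)$-invariant, $f$ and $f_0$ are equimeasurable, so their decreasing rearrangements coincide, $f^\ast=f_0^\ast$, and the majorization step becomes an equality. If in addition $\Omega$ is a super-level set $\{f>t\}$ of $f$, then $\nu(\{f>t'\}\cap\Omega)=\min(\nu(\{f>t'\}),s)$ for every $t'$, so the Hardy--Littlewood bound is also saturated. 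Both inequalities in the chain are then equalities and \eqref{eq cor 3.2} holds with equality. That such a set $\Omega$ is a geodesic ball centered at $v$ follows from \eqref{eq uzero}, where $u_0$ is seen to depend only on the geodesic distance in $\C P^{N-1}$ from $[v]$.

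The argument is essentially a matter of combining standard rearrangement inequalities, so the main work lies in the bookkeeping rather than in any single hard estimate. The point requiring the most care is the equality analysis: one must track the equality cases of the layer-cake (Hardy--Littlewood) bound and of the majorization simultaneously, and check that choosing $F$ coherent forces $f^\ast=f_0^\ast$ while choosing $\Omega$ to be a super-level set of $f$ is exactly what saturates the remaining inequality. One should also note, to justify applying the Hardy--Littlewood--P\'olya equivalence in the first place, that $\int_0^1 f^\ast=\int_0^1 f_0^\ast=1/\dim(\H_M)$ by the normalization \eqref{eq norm}, so that $f$ and $f_0$ carry the same total mass.
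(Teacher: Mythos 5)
Your proposal is correct and follows essentially the same route as the paper: the paper reduces Corollary \ref{cor 2} to the partial-integral majorization $\int_0^s f^\ast(\tau)\,d\tau \le \int_0^s f_0^\ast(\tau)\,d\tau$ via the Hardy--Littlewood--P\'olya equivalence and then simply states that this \emph{implies} the result, and your Hardy--Littlewood step, the identification of $\int_0^s f_0^\ast$ with $\int_{\Omega^\ast} f_0\,d\nu$ (using that $f_0$ is radial and decreasing), and the mass normalization are exactly the details left implicit there. Your equality analysis (equimeasurability of $f$ and $f_0$ under the $SU(N)$-invariant measure, plus saturation of the Hardy--Littlewood bound when $\Omega$ is a super-level set of $f$) correctly supplies the proof of the \emph{Moreover} part, which the paper asserts without argument.
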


    \section{The bound in quantitative form}\label{sec 4}
    In this section we show that the same proof strategy of Theorem \ref{th:main theorem}, when combined with a suitable upper bound for the distribution function of the Husimi function, yields, in fact, a sharp quantitative form of the bound \eqref{eq:Lieb-Solovej inequality}. This illustrates Principle 2 (see Introduction). 

We adopt the notation of Section \ref{sec 2}. Hence, we denote by $\rho$ a density operator on the space $\H_M$ and by $u$ its Husimi function. Let $D[\rho]$ be the distance, in the trace norm, between $\rho$ and the subset of coherent states, that is 
\[
D[\rho]:=\inf_{v\in \C^N, |v|=1}\| \rho- | \otimes^M v\rangle \langle \otimes^M v|\|_1.
\]
Then, we have the following sharp bound. 
\begin{teo}\label{teo main2}
For every strictly convex function $\Phi:[0,1]\to\R$, there exists a constant $c>0$ such that, for every density operator $\rho$ in $\H_M$ we have 
\begin{equation}\label{eq stabilita}
    \int_{SU(N)} \Phi( u_0(R) ) \, dR- \int_{SU(N)} \Phi( u(R) ) \, dR\geq c D[\rho]^2,
		\end{equation}
		where $u$ is the Husimi function of $\rho$ and $u_0$ is the Husimi function of any coherent state.
	\end{teo}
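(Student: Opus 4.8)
The plan is to make Principle 2 operational by combining the stability inequality of Lemma \ref{lem lemma 23} with a quantitative control of $D[\rho]$ in terms of the quantity $1-T$, where $T=\sup_R u(R)$. The first step is to reduce the trace-norm distance to the sup of the Husimi function. Recall that for the \emph{pure} state component picture, if $\rho=\sum_j p_j|\psi_j\rangle\langle\psi_j|$, then $u(R)=\sum_j p_j|\langle\otimes^M Rv_0|\psi_j\rangle|^2$. I would show that
\begin{equation*}
D[\rho]^2\leq C\,(1-T)
\end{equation*}
for a dimensional constant $C$. The heuristic is that $T$ close to $1$ forces $\rho$ to be close to a rank-one projection onto some $\otimes^M R v_0$: indeed, at a near-maximizing $R^\ast$ with $u(R^\ast)\geq T-\varepsilon$, the spectral/Cauchy--Schwarz estimate gives that the largest eigenvalue $p_1$ of $\rho$ is at least $u(R^\ast)$, and moreover that the corresponding eigenvector $\psi_1$ has large overlap $|\langle\otimes^M R^\ast v_0|\psi_1\rangle|^2$. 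A direct computation, using $\sum_j p_j=1$ and orthonormality of the $\psi_j$, then bounds $\|\rho-|\otimes^M R^\ast v_0\rangle\langle\otimes^M R^\ast v_0|\|_1$ by a multiple of $\sqrt{1-T}$, which upon squaring yields the linear bound in $1-T$.

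The second step is a lower bound for the entropy deficit in terms of $1-T$. Here I would invoke Lemma \ref{lem lemma 23}: the right-hand side of \eqref{eq:stability estimate} is
\begin{equation*}
\int_T^1\big(\Phi'(t)-\Phi'_-(T)\big)\mu_0(t)\,dt.
\end{equation*}
Since $\Phi$ is strictly convex and $C^1$ on compacts of $(0,1)$, there is a convexity modulus giving $\Phi'(t)-\Phi'_-(T)\geq \kappa(t-T)$ for $t$ in a fixed subinterval near $1$, and from the explicit form of $\mu_0$ (via $u_0$ in \eqref{eq uzero}, whose distribution function was recorded in the Remark as $\mu_0(t)=1-\big(1-t^{1/M}\big)^{N-1}$, so that $\mu_0(t)\sim \tfrac{N-1}{M}(1-t)$ as $t\to 1^-$) one gets a lower bound of the form
\begin{equation*}
\int_T^1\big(\Phi'(t)-\Phi'_-(T)\big)\mu_0(t)\,dt\geq c'\,(1-T)^{3}
\end{equation*}
for $T$ near $1$. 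This exponent (quadratic from the convexity factor times the deficit integration, plus the linear vanishing of $\mu_0$) is the crux: naively it is cubic in $1-T$, which is \emph{too weak} to match the square of the $\sqrt{1-T}$ distance bound.

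This mismatch is exactly the main obstacle, and resolving it forces a sharper distance estimate: the correct reduction must be $D[\rho]^2\leq C(1-T)^{3/2}$ or better, \emph{or} the deficit must be bounded below more cleverly by retaining the full integral rather than a single convexity modulus. I expect the resolution to come from a \emph{two-sided} analysis near the maximizer using the complex structure of $\C P^{N-1}$, precisely the point flagged in Section \ref{sec 3} (Lemma \ref{lem lemma paolo}). Concretely, near a maximizing point the Husimi function of a near-coherent state behaves like $u_0$ up to a quadratic error in geodesic distance, so its super-level sets $\{u>t\}$ have measure comparable to $\mu_0(t)$ \emph{with matching constants}, which upgrades both the distance bound and the deficit bound to the same power of $1-T$. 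The remaining work, once the exponents are matched for $T$ near $1$, is the routine compactness argument to handle $T$ bounded away from $1$: on that regime $D[\rho]$ is bounded and the deficit is bounded below by a strictly positive continuous function of $\rho$ (by Theorem \ref{th:main theorem}, the deficit vanishes only on coherent states), so a global constant $c$ can be extracted by combining the two regimes. The hard part will be getting the sharp local comparison of super-level-set measures; everything downstream is bookkeeping.
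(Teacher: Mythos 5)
You have correctly isolated the crux: the bound $D[\rho]^2\leq C(1-T)$ (which the paper takes from \cite[Proposition 2.2]{FNT}, as in \eqref{eq drho}) is linear in $1-T$, while applying Lemma \ref{lem lemma 23} verbatim gives a deficit bound that vanishes to higher order as $T\to1^-$, so the exponents do not match. You have even pointed at the right tool (Lemma \ref{lem lemma paolo}). But the proposal stops exactly where the proof has to happen, and the two resolutions you sketch are respectively impossible and incorrectly described. First, the distance bound cannot be upgraded to $D[\rho]^2\leq C(1-T)^{3/2}$: for pure states one has $D[\rho]^2=4(1-T)$ exactly, so the linear power is sharp and all the work must go into the deficit. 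Second, the mechanism you propose for the deficit --- a local expansion of $u$ near its maximizer, with super-level sets of $u$ comparable to those of $u_0$ ``with matching constants'' near the peak --- is not how the mismatch is resolved, and it is unclear it could be: the loss in your naive estimate comes from the factor $\Phi'(t)-\Phi'_-(T)$ and from $\mu_0(t)$ both being small on $(T,1)$, i.e.\ from restricting attention to levels \emph{above} $T$, and no comparison of level sets near the peak changes that. The actual fix is to re-run the proof of Lemma \ref{lem lemma 23} with the truncation point frozen at a fixed $\tau_1<1$, independent of $T$, which yields
\begin{equation*}
\int_{SU(N)}\Phi(u_0(R))\,dR-\int_{SU(N)}\Phi(u(R))\,dR\geq\int_{\tau_1}^1\bigl(\Phi'(t)-\Phi'_-(\tau_1)\bigr)\bigl(\mu_0(t)-\mu(t)\bigr)\,dt ,
\end{equation*}
where now $\mu(t)$ no longer vanishes on $[\tau_1,T]$ and must be controlled from above. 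This is precisely what Lemma \ref{lem lemma paolo} provides: $\mu(t)\leq(1+C_0(1-T))\mu_0(t/T)$ for $t\in[\tau_1,T]$. Since $\mu_0(t)=(1-t^{1/M})^{N-1}$ has strictly negative derivative at levels $t$ bounded away from $1$, the dilation $t\mapsto t/T$ produces a gain $\mu_0(t)-(1+C_0(1-T))\mu_0(t/T)\geq\varepsilon(1-T)$ on a fixed interval $[\tau_1,\tau_2]$, and integrating against $\Phi'(t)-\Phi'_-(\tau_1)>0$ there gives a deficit bound that is \emph{linear} in $1-T$, matching \eqref{eq drho}. The gain is harvested at levels of fixed height, not near the maximum of $u$.

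Two further inaccuracies. Your formula for the distribution function is wrong: $\mu_0(t)=(1-t^{1/M})^{N-1}$ (see \eqref{eq mu0}), which is decreasing, not $1-(1-t^{1/M})^{N-1}$; consequently $\mu_0(t)\sim((1-t)/M)^{N-1}$ as $t\to1^-$, so the naive deficit bound is of order $(1-T)^{N+1}$, i.e.\ even weaker than the cubic rate you state once $N\geq3$ (this only strengthens your point about the mismatch, but the asymptotics you rely on are incorrect). Finally, the compactness argument you invoke for the regime $T\leq\tau_3<1$ is unnecessary: in that regime Lemma \ref{lem lemma 23} already gives a deficit bounded below by the fixed positive constant $\int_{\tau_3}^1(\Phi'(t)-\Phi'_-(\tau_3))\mu_0(t)\,dt$, while $D[\rho]^2\leq4$ always; this keeps the constant $c$ explicit, a point the paper emphasizes (its constant is not obtained by compactness).
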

    It follows from the proof that the constant $c$ in \eqref{eq stabilita} is explicit, in the sense that it is not
obtained by a compactness argument.

In order to prove this result we first observe that, by a general argument (which is detailed in \cite[Proposition 2.2]{FNT}) we have 
\begin{equation}\label{eq drho}
D[\rho]^2\leq 4(1-T)
\end{equation}
where $T=\sup_{R\in SU(N)} u(R)$ (with equality if $\rho$ is a pure state). 

With the intention to enhance the estimate in Lemma \ref{lem lemma 23}, we therefore begin with an inspection of the distribution functions $\mu(t)=|\{u>t\}|$ and $\mu_0(t)=|\{u_0>t\}|$. 

By an explicit computation, working in the affine chart $\C^{N-1}$ as in the previous section, using \eqref{eq uzero} and polar coordinates, we easily see that, for $0\leq t\leq 1$,
\begin{equation}\label {eq mu0}
\mu_0(t)=\frac{(N-1)!}{\pi^{N-1}}\int_{|z'|<R}\frac{1}{(1+|z'|^2)^N}\, dA(z')=(1-t^{1/M})^{N-1},
\end{equation}
where we set $R=\sqrt{\frac{1-t^{1/M}}{t^{1/M}}}$  ($0\leq t\leq 1$).

Concerning $\mu(t)$, we need the following upper bound. 
\begin{lem}\label{lem lemma paolo}
    For every $t_0\in (0,1)$ there exists $T_0\in(t_0,1)$ and $C_0>0$ such that, for $T\in [T_0,1]$,
    \begin{equation}\label{eq lemma paolo}
    \mu(t)\leq (1+C_0(1-T))\mu_0(t/T),\qquad \forall t\in [t_0,T]. 
    \end{equation}
    \end{lem}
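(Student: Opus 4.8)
The plan is to establish the pointwise bound \eqref{eq lemma paolo} on the distribution function $\mu(t)$ by comparing the super-level sets $\{u > t\}$ of a general Husimi function with those of the coherent-state function $u_0$, exploiting the complex-analytic structure of $\C P^{N-1}$. First I would reduce to the case of a \emph{pure state}, so that in the affine chart the Husimi function takes the concrete form $u(z) = |F(z')|^2 (1+|z'|^2)^{-M}$ for a holomorphic polynomial $F \in \P_M$ of degree at most $M$ with $\|F\|_{\P_M} = 1$; indeed, since $\mu(t)$ and $T = \sup u$ depend on $\rho$ through $u$ alone, and since a general density operator has Husimi function that is a convex combination $\sum_j p_j |F_j(z')|^2 (1+|z'|^2)^{-M}$, I expect the super-level set of the mixture to be controlled by those of its pure constituents (a mixture is ``more spread out,'' so its level sets are no smaller, and the bound for the extreme case $T$ close to $1$ is the binding one). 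The key geometric input is that, by Proposition \ref{lem 25}, when $T$ is close to $1$ the polynomial $F$ is \emph{close to the extremal form} $\langle v \mid (1,z')\rangle^M$ for some unit vector $v$; after rotating by an element of $SU(N)$ we may assume $v = (1,0,\ldots,0)$, so that the maximum of $u$ is attained near the origin of the chart.

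The heart of the argument is then a \emph{local comparison near the maximum}. The plan is to analyze the geometry of the super-level set $\{u > t\}$ for $t \in [t_0, T]$ and show its measure exceeds that of the corresponding ball $\{u_0 > t/T\}$ by at most a factor $1 + C_0(1-T)$. I would first note that the rescaled function $u/T$ takes values in $[0,1]$ with supremum exactly $1$; the factor $t/T$ on the right of \eqref{eq lemma paolo} precisely normalizes $u_0$ to the same scale. Writing $u(z) = |F(z')|^2 (1+|z'|^2)^{-M}$ and expressing the condition $u > t$ as a condition on $|F(z')|^2$, I would use the subharmonicity of $\log|F|$ (or of $|F|^2$, as invoked in the proof of Proposition \ref{lem 25} through the identity $\Delta |F|^p = p^2 |F|^{p-2}|\partial F|^2$) to control the shape of the level set. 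The guiding principle is a \emph{second-variation} estimate: since $u_0$ is the extremizer and, by \eqref{eq mu0}, $\mu_0(t) = (1-t^{1/M})^{N-1}$ is known explicitly, the deviation of $\mu(t)$ from $\mu_0(t/T)$ should be quadratically small in the deviation of $u$ from the extremal profile, and hence linearly small in $1-T$ by \eqref{eq drho} and the relation $D[\rho]^2 \le 4(1-T)$.

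The main obstacle I anticipate is obtaining the bound \emph{uniformly} down to the lower cutoff $t_0$, rather than merely in an infinitesimal neighborhood of the maximum. Near the peak $t \to T$ the comparison is essentially a Hessian computation at the nondegenerate maximum of $u$, but at the level $t = t_0$ the super-level set $\{u > t_0\}$ is a macroscopic region whose shape can differ from a geodesic ball in a way not immediately governed by the second-order behavior. To handle this I would fix $t_0$ first and \emph{then} choose $T_0$ close enough to $1$ (depending on $t_0$) so that, throughout the compact range $t \in [t_0, T]$, the polynomial $F$ stays within a controlled neighborhood of the extremal family — a compactness/continuity argument on the finite-dimensional space $\P_M$, where the map $F \mapsto (\text{shape of level sets})$ is continuous and the extremizers form the compact Veronese orbit. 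The quantitative rate $C_0(1-T)$ would then follow from a uniform Lipschitz or quadratic-expansion estimate on this neighborhood. A subtlety to watch is the behavior of $\mu_0(t/T)$ as $t \to T^-$ (where $t/T \to 1$ and $\mu_0 \to 0$): one must check that both sides vanish at comparable rates, which the explicit formula $\mu_0(s) = (1-s^{1/M})^{N-1}$ makes tractable but requires care since the exponent $N-1$ governs the order of vanishing.
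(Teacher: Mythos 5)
Your plan is a genuinely different route from the paper's, but it has two genuine gaps. For context: the paper does not argue by compactness or second variation at all; it proves Lemma \ref{lem lemma paolo} by observing that the super-level sets in question are those of $\sum_j p_j|F_j(z')|^2(1+|z'|^2)^{-M}$ with $F_j$ orthonormal in $\P_M$, and then adapting, step by step, the direct level-set argument of \cite[Lemma 2.1]{GGRT} (subharmonicity of $\log|F|$ combined with isoperimetric-type estimates, originally for the Fock weight $e^{-\pi|z'|^2}$), together with \cite[Lemma 2.6]{FNT} for the case of density operators; this yields explicit constants, which is what later allows the claim that the constant in Theorem \ref{teo main2} is not obtained by compactness. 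The first gap in your plan is the reduction to pure states: your heuristic points in the wrong direction. You need an \emph{upper} bound on $\mu(t)$ for the mixture; ``a mixture is more spread out, so its level sets are no smaller'' is the opposite inequality, and the obvious inclusion $\{u>t\}\subset\bigcup_j\{p_j^{-1}\,p_j|F_j|^2(1+|z'|^2)^{-M}>t\}$ only yields a factor equal to the number of terms, not $1+C_0(1-T)$. What is actually needed (and is the content of \cite[Lemma 2.6]{FNT}) is that $T$ close to $1$ forces one weight to be close to $1$, with the remaining terms absorbed into the error $C_0(1-T)$ --- an argument your proposal does not contain.

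The second and more serious gap is that the heart of your plan --- that the \emph{relative} deviation $\mu(t)/\mu_0(t/T)-1$ is quadratic in the distance from $F$ to the Veronese orbit, hence $O(1-T)$ by \eqref{eq drho}, uniformly for $t\in[t_0,T]$ --- is essentially a restatement of \eqref{eq lemma paolo} itself, and the tools you invoke cannot deliver it. Compactness of the unit sphere of $\P_M$ plus continuity gives only a qualitative $o(1)$ bound as $F$ approaches the orbit; it cannot produce an error that vanishes at the prescribed rate $C_0(1-T)$. Moreover, note that near $t=T$ both sides vanish like $(T-t)^{N-1}$, so any \emph{absolute} quadratic bound on $\mu(t)-\mu_0(t/T)$ is useless there: the multiplicative form amounts to a lower bound on the determinant of the Hessian of $-\log u$ at its maximum, whereas subharmonicity of $\log|F|^2$ (the tool you import from Proposition \ref{lem 25}) only bounds the \emph{trace} of that Hessian from above and by itself allows the determinant to degenerate. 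Proximity to the extremal family must therefore enter quantitatively, with uniform control of the second-order error over the whole range $[t_0,T]$; deferring this to an unproven ``uniform Lipschitz or quadratic-expansion estimate'' leaves precisely the quantitative content of the lemma unproved --- it is exactly what the adaptation of \cite[Lemma 2.1]{GGRT} is there to supply. Your scaling heuristics ($1-T$ comparable to the squared deviation, first-order cancellation of level-set measures) are sound and consistent with examples, but they constitute a plan for a proof, not a proof.
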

    \begin{proof}
        We regard $u$ as a function on $\C P^{N-1}$, and therefore on the affine chart $\C^{N-1}$, as in the previous section. Then $\mu(t)$ represents the $\nu$-measure in $\C^{N-1}$ of the super-level set 
        \[
        \{z'\in\C^{N-1}:\ |F(z')|^2(1+|z'|^2)^{-M}>t\}
        \]
        for some $F\in\P_M$, $\|F\|_{\P_M}=1$ --- if $\rho$ is a pure state --- or more generally of the set \[
        \big\{z'\in\C^{N-1}:\ \sum_{j=1}^n p_j|F_j(z')|^2(1+|z'|^2)^{-M}>t\big\},
        \]
        where the $F_j$'s are orthonormal in $\P_M$. Now, an estimate analogous to \eqref{eq lemma paolo} was already proved in \cite[Lemma 2.1]{GGRT} (see also \cite[page 823]{GGRT} for the multidimensional case and \cite[Lemma 2.6]{FNT} for the case of density operators) for functions in the Fock space, that is, when the above weight $(1+|z'|^2)^{-M}$ is replaced by $e^{-\pi|z'|^2}$. A careful inspection of the proof of \cite[Lemma 2.1]{GGRT} shows that each step of that argument can be adapted to the present situation, and this leads to the desired bound \eqref{eq lemma paolo}. We omit the details because several variations on the theme of \cite[Lemma 2.1]{GGRT} have already appeared in the literature (see, e.g., \cite{GFOC} for holomorphic polynomials in one variable and \cite{GKMR} for analogous results in Bergman spaces), and this machinery can therefore be considered well known to experts. In addition, the bound \eqref{eq lemma paolo} was also recently proved in \cite{GFOCprogress} (private communication). 
    \end{proof}
    Now we come to the proof of Theorem \ref{teo main2}. Let $\tau_1\in(0,1)$ be a constant that will be chosen later. By the same argument as in Lemma \ref{lem lemma 23}, with $T$ replaced by $\tau_1$, we obtain the estimate 
    \begin{align*}
\int_{SU(N)} \Phi( u_0(R) ) \, dR- \int_{SU(N)} &\Phi( u(R) ) \, dR
\\ &\geq \int_{\tau_1}^1 (\Phi'(t)-\Phi'_-(\tau_1))(\mu_0(t)-\mu(t))\, dt. 
    \end{align*}
    Hence, by \eqref{eq drho} we see that it is  sufficient to prove that 
    \[
\int_{\tau_1}^1 (\Phi'(t)-\Phi'_-(\tau_1))(\mu_0(t)-\mu(t))\, dt\geq c(1-T)
    \]
    for some $c>0$, for a suitable choice of $\tau_1$. We can also suppose that $T$ is sufficiently close to $1$, because otherwise the desired result follows from Lemma \ref{lem lemma 23}.   

Let $t_0=1/2$ and let $T_0$ be the corresponding threshold in Lemma \ref{lem lemma paolo}. We see that it is sufficient to prove that there exist $\tau_1,\tau_2,\tau_3$, with $T_0\leq \tau_1<\tau_2<\tau_3<1$, and $\varepsilon>0$ such that 
\begin{equation}\label{eq diff1}
\mu_0(t)-(1+C_0(1-T))\mu_0(t/T)\geq 0\qquad \textrm{for}\ \tau_1\leq t\leq T
\end{equation}
and 
\begin{equation}\label{eq diff2}
\mu_0(t)-(1+C_0(1-T))\mu_0(t/T)\geq \varepsilon(1-T)\qquad  \textrm{for}\ \tau_1\leq t\leq\tau_2<\tau_3\leq  T.
\end{equation}
Indeed, from Lemma \ref{lem lemma paolo},  \eqref{eq diff1} and \eqref{eq diff2} we deduce that 
\begin{align*}
    \int_{\tau_1}^1 (\Phi'(t)&-\Phi'_-(\tau_1))(\mu_0(t)-\mu(t))\, dt\\
    &\geq \int_{\tau_1}^1 (\Phi'(t)-\Phi'_-(\tau_1))\big(\mu_0(t)-(1+C_0(1-T))\mu_0(t/T)\big)\, dt 
    \\
    &\geq \int_{\tau_1}^{\tau_2} (\Phi'(t)-\Phi'_-(\tau_1))\big(\mu_0(t)-(1+C_0(1-T))\mu_0(t/T)\big)\, dt \\
    &\geq c(1-T),
\end{align*}
with $c=\varepsilon \int_{\tau_1}^{\tau_2}(\Phi'(t)-\Phi'_-(\tau_1))\, d\tau>0$. 

It remains to prove \eqref{eq diff1} and \eqref{eq diff2}. Observe that, for $1/2\leq t\leq T\leq 1$, we have 
\begin{align*}
\mu_0(t)-(1+C_0&(1-T))\mu_0(t/T)\\
&=\int_T^1\Big(-\frac{t}{\tau^2}\mu'_0(t/\tau)(1+C_0(1-\tau))-C_0\mu_0(t/\tau) \Big)\,d\tau \\
&\geq \int_T^1\Big(-\frac{1}{2}\mu'_0(t/\tau)-C_0\mu_0(t/\tau) \Big)\,d\tau,
\end{align*}
where we used that $\mu'_0\leq0$. Then, \eqref{eq diff1} and \eqref{eq diff2}  follow by observing that, setting 
\[
\phi(r):=-\frac{1}{2}\mu'_0(r)-C_0\mu_0(r)\qquad r\in [0,1],
\]
we have 
\[
\phi(r)> 0\qquad\textrm{for}\ r_1\leq r< 1
\]
if $r_1\in(0,1)$ is sufficiently close to $1$, and therefore, for every $r_2\in(r_1,1)$ we also have
\[
\phi(r)\geq \varepsilon\qquad\textrm{for}\ r_1\leq r\leq r_2
\]
for some $\varepsilon>0$. 
This is a consequence of the fact that $\mu_0$ in \eqref{eq mu0} is decreasing and (as a function on [0,1]) vanishes to finite order at $1$. 
This concludes the proof of Theorem \ref{teo main2}. 
\begin{remark}
From the above proof it follows that \eqref{eq diff1} and \eqref{eq diff2} hold in fact for every triple $0<\tau_1<\tau_2<\tau_3<1$, with $\tau_1$ sufficiently close to $1$. Hence Theorem \ref{teo main2} extends to any convex function $\Phi$ (not necessarily strictly convex) such that, for every $\varepsilon\in (0,1)$, $\Phi$ is not an affine function on the interval $(1-\varepsilon,1)$.
\end{remark}
\begin{remark}\label{rem unicita faber-krahn} With the same notation of the proof of Theorem \ref{teo main2}, by Lemma \ref{lem lemma paolo} and the argument that led to \eqref{eq diff1} we see that, if $u$ is the Husimi function of a state that is not a coherent state, hence $T<1$ by \eqref{eq implicazione T=1}, we have $\mu(t)<\mu_0(t)$ for $t\in [\tau_1,1]$ for a suitable constant $\tau_1\in (0,1)$. As a consequence, denoting by $u^\ast(s)$ and $u_0^\ast(s)$, with $s\in [0,1]$, the decreasing rearrangements of $u$ and $u_0$, respectively, and setting $s_1=\mu_0(\tau_1)$, we have $s_1\in (0,1)$ and 
\[
u^\ast(s)<u_0^\ast (s)\qquad s\in [0,s_1].
\]
This implies that, in Corollary \ref{cor 2}, if $0<\nu(\Omega)<s_1$, equality occurs in \eqref{eq cor 3.2} if and only if 
\[
F(z')=\langle v| (1,z')\rangle_{\C^N}^M,\qquad z'\in\C^{N-1}
\]
for some $v\in\C^N$, with $|v|=1$, and $\Omega$ is a super-level set of the function 
\[
\C^{N-1}\ni z'\mapsto |\langle v| (1,z')\rangle_{\C^N}|^{2M}(1+|z'|^2)^{-M}.
\]
\end{remark}
\section{Extremizers of the contractive estimates in weighted Bergman spaces}\label{sec 5}
In this section we illustrate Principle 1 (see Introduction) in the context of contractive inequalities for weighted Bergman spaces on the unit ball of $\C^n$. 

Let $\bB$ be the unit ball in $\C^n$, $n\geq 1$, and let $dv(z)$, with $z\in\bB$, be the Lebesgue measure.
 Consider the hyperbolic measure 
\[
dv_g(z)=\frac{dv(z)}{(1-|z|^2)^{n+1}}.
\]
For $0<p<\infty$, $\alpha>n$, let $A^p_\alpha$ be the space of holomorphic functions $f$ on $\bB$ such that 
\[
\|f\|^p_{A^p_\alpha}:=c_n\int_\bB |f(z)|^p(1-|z|^2)^\alpha dv_g(z)<\infty, 
\]
with $c_\alpha=\frac{\Gamma(\alpha)}{\alpha!\Gamma(\alpha-n)}$, so that $\|1\|_{A^p_\alpha}=1$ for every admissible value of $p$ and $\alpha$. 

The following inequality was conjectured in \cite{LS21}, in dimension $n=1$.
\begin{conj}\label{thm li_su}
    Let $\Phi:[0,1]\to\mathbb{R}$ be a convex function. For every $p\in(0,\infty)$, $\alpha>n$ and $f\in A^p_\alpha$, with $\|f\|_{A^p_\alpha}=1$, we have 
    \begin{equation}\label{eq li_su}
        \int_\bB \Phi(|f(z)|^p(1-|z|^2)^\alpha)\,dv_g(z) \leq \int_\bB \Phi((1-|z|^2)^\alpha)\,dv_g(z). 
    \end{equation}
\end{conj}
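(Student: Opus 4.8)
The plan is to run the distribution-function/isoperimetric scheme that underlies the known one-dimensional case \cite{kulikov22}, but carried out intrinsically on the complex hyperbolic geometry of $\bB$. Write $u(z)=|f(z)|^p(1-|z|^2)^\alpha$ and $u_0(z)=(1-|z|^2)^\alpha$. By the Hardy--Littlewood--P\'olya majorization theory (exactly as invoked in Section~\ref{sec 3}), and since $u,u_0\to0$ at $\partial\bB$ with equal finite mass $\int_\bB u\,dv_g=\int_\bB u_0\,dv_g=c_\alpha^{-1}$, the inequality \eqref{eq li_su} for \emph{every} convex $\Phi$ (necessarily normalized by $\Phi(0)=0$, since $v_g(\bB)=\infty$) is equivalent to the majorization
\[
\int_0^s u^\ast(\tau)\,d\tau\le\int_0^s u_0^\ast(\tau)\,d\tau\qquad\forall s>0,
\]
where $u^\ast,u_0^\ast$ are the decreasing rearrangements with respect to $dv_g$. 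Equivalently, writing $\Psi_f(s)=\sup_{v_g(\Omega)=s}\int_\Omega u\,dv_g$, it suffices to establish the concentration bound $\Psi_f(s)\le\Psi_{f_0}(s)$, the extremal configuration being $f_0\equiv1$ with $\Omega$ a geodesic ball centered at the origin.

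First I would record the pointwise bound $u(z)\le1$, with equality at a given point only for the reproducing-kernel states. This is the Bergman analogue of the estimate $|F(0)|\le1$ in Proposition~\ref{lem 25}: by the sub-mean value property of the subharmonic function $|f|^p$ over Bergman-metric balls, combined with the M\"obius homogeneity of $\bB$, one reduces to the origin and obtains $u\le1$, the rigidity coming from harmonicity of $|f|^p$ forcing $f$ to be the appropriate power of a reproducing kernel. In particular $T:=\sup_\bB u\le1=\sup_\bB u_0$, which fixes the ``initial condition'' for the comparison near the top.

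The analytic heart is a differential inequality for $\mu(t)=v_g(\{u>t\})$ (equivalently for $\Psi_f$). Set $w=\log u=p\log|f|+\alpha\log(1-|z|^2)$. Since $\log|f|$ is pluriharmonic off the zero variety $Z=\{f=0\}$, the invariant (Bergman) Laplacian satisfies $\widetilde\Delta w=\alpha\,\widetilde\Delta\log(1-|z|^2)\equiv-c_0<0$ on $\bB\setminus Z$, a constant independent of $p$; moreover $w=-\infty$ on $Z$, so $Z\cap\{w>s\}=\emptyset$ and the positive singular part of $\widetilde\Delta w$ along $Z$ never enters the super-level sets. Applying the divergence theorem on $\{w>s\}$ gives $\int_{\{w=s\}}|\nabla_g w|\,d\sigma_g=c_0\,\mu(s)$, while the co-area formula gives $-\mu'(s)=\int_{\{w=s\}}|\nabla_g w|^{-1}\,d\sigma_g$. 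By Cauchy--Schwarz,
\[
\sigma_g(\{w=s\})^2\le c_0\,\mu(s)\,(-\mu'(s)),
\]
and invoking the sharp isoperimetric inequality $\sigma_g(\partial\Omega)\ge I\big(v_g(\Omega)\big)$ on $(\bB,g)$ yields the closed differential inequality $-\mu'(s)\ge I(\mu(s))^2/(c_0\mu(s))$, saturated precisely by $u_0$ (whose super-level sets are geodesic balls). A Gronwall/ODE comparison, anchored at the top by $T\le1$, then delivers $\Psi_f\le\Psi_{f_0}$ and hence \eqref{eq li_su}.

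The decisive obstacle is the isoperimetric step, and this is exactly what confines the argument to $n=1$. For $n=1$ the metric $g$ is the real hyperbolic plane (constant curvature), whose isoperimetric inequality is classical with geodesic disks as minimizers, so the scheme closes (recovering \cite{kulikov22}); likewise the Fock-space analogue of \cite{NT} only needs the Euclidean isoperimetric inequality in $\R^{2n}$. For $n\ge2$, however, $(\bB,g)$ is \emph{complex} hyperbolic space, which does not have constant sectional curvature, and the sharp isoperimetric inequality there---whether geodesic balls minimize perimeter at fixed volume---is a well-known open problem. Thus the whole approach reduces Conjecture~\ref{thm li_su} for $n\ge2$ to this unresolved isoperimetric question. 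Absent it, one could still extract partial results (a version of the bound valid on small super-level sets, giving \eqref{eq li_su} for convex $\Phi$ concentrated near the maximal value, or stability in the spirit of Principle~2), but the full multidimensional inequality appears to require either a proof of the complex-hyperbolic isoperimetric inequality or a symmetrization genuinely adapted to the complex structure.
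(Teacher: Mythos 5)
You have not proved the statement, but you correctly diagnose why no one has: this is labeled a \emph{Conjecture} in the paper, which offers no proof for $n\geq 2$ --- it only records that the case $n=1$ was settled in \cite{kulikov22} and that the sharp isoperimetric conjecture on $(\bB,g)$ (geodesic balls minimize perimeter at fixed hyperbolic volume) implies Conjecture \ref{thm li_su}, citing \cite{ortega,li_su}. Your scheme is precisely that known reduction, spelled out: set $w=\log u$, use that $\log|f|$ is pluriharmonic off $Z=\{f=0\}$ so that the invariant Laplacian gives $\widetilde\Delta w=-c_0$ on superlevel sets (your computation is right: $\widetilde\Delta\log(1-|z|^2)$ is a negative constant on $\bB$ for every $n$, and $w=-\infty$ on $Z$ keeps the singular part of $\widetilde\Delta\log|f|$ out of $\{w>s\}$); combine the divergence theorem, coarea formula and Cauchy--Schwarz into $\sigma_g(\{w=s\})^2\leq c_0\,\mu(s)(-\mu'(s))$; then close with the isoperimetric profile and an ODE comparison anchored at $T\leq 1$. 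For $n=1$ this recovers \cite{kulikov22}; for $n\geq2$ it reduces the conjecture to the open isoperimetric problem for complex hyperbolic space, which is exactly the state of the art the paper describes.

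So the one genuine gap is the one you name yourself: for $n\geq2$ the Bergman metric is complex hyperbolic, not of constant sectional curvature, the sharp isoperimetric inequality there is unproven, and without it the differential inequality $-\mu'(s)\geq I(\mu(s))^2/(c_0\,\mu(s))$ cannot be closed. Your proposal is therefore an honest conditional argument, not a proof, and should be presented as such. If you do write it up conditionally, tighten two points: (i) since $v_g(\bB)=\infty$, both sides of \eqref{eq li_su} require the normalization $\Phi(0)=0$ (or must be read as a difference), and the Hardy--Littlewood--P\'olya equivalence between majorization and the family of convex inequalities should be quoted in its infinite-measure form; (ii) the coarea and divergence-theorem manipulations need a.e.\ regularity of level sets (Sard-type arguments, as handled in \cite{NT,ortega}), while the pointwise bound $u\leq1$ with rigidity at reproducing kernels, which you sketch via sub-mean value plus M\"obius transitivity, is indeed standard (\cite[Theorem 2.1]{zhu_holomorphic}); the paper runs the same normalize-at-a-point argument in its conditional theorem in Section \ref{sec 5}.
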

This conjecture was recently proved in \cite[Theorem 1.2 and Remark 4.3]{kulikov22} in dimension $n=1$, where a characterization of the extremizers is also provided. 
In dimension $n\geq 2$ this inequality is still open (it is known that the isoperimetric conjecture, that is, that isoperimetric subsets of $\bB$ are geodesic balls, implies Conjecture \ref{thm li_su}; see \cite{ortega,li_su}). 
The following result gives a characterization of the extremizers in arbitrary dimension, provided that the above inequality holds true. 

\begin{teo}
    Let $p\in(0,\infty)$, $\alpha>n$. Suppose that for every $f\in A^p_\alpha$, with $\|f\|_{A^p_\alpha}=1$, and for every $\Phi:[0,1]\to\R$ convex the inequality \eqref{eq li_su} holds true. 
    
 Then, equality occurs in \eqref{eq li_su} if 
\begin{equation}\label{eq f extre}
    f(z)=\frac{e^{i\theta}(1-|w|^2)^{\alpha/p}}
    {(1-\langle w|z\rangle_{\C^n})^{2\alpha/p}},\qquad z\in\bB
\end{equation}
for some $w\in\bB$ and $\theta\in \R$. 

Moreover, assuming $\Phi$ strictly convex, if the integral on the right-hand side of \eqref{eq li_su} is finite and for some $f\in A^p_\alpha$, with $\|f\|_{A^p_\alpha}=1$, equality occurs in \eqref{eq li_su}, then $f$ has the form in \eqref{eq f extre} for some $w\in\bB$ and $\theta\in \R$. 
\end{teo}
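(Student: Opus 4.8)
The plan is to transplant, almost verbatim, the two-step mechanism of Section~\ref{sec 2}: a pointwise bound together with its rigidity, and a truncation argument that upgrades the assumed inequality \eqref{eq li_su} into a stability estimate. The role played there by the compact group $SU(N)$ is now played by the automorphism group of $\bB$, and the only genuinely new ingredient is an attainment (compactness) argument, forced by the fact that $dv_g$ has infinite total mass and $\bB$ is non-compact.

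First I would set $u(z):=|f(z)|^p(1-|z|^2)^\alpha$ and $u_0(z):=(1-|z|^2)^\alpha$, and record the two elementary facts that make everything run. Writing $\varphi_w$ for the involutive automorphism of $\bB$ interchanging $0$ and $w$, the identity $1-|\varphi_w(z)|^2=(1-|w|^2)(1-|z|^2)/|1-\langle w|z\rangle|^2$ together with the $\mathrm{Aut}(\bB)$-invariance of $dv_g$ shows that the weighted composition operator $U_wg:=(g\circ\varphi_w)\,(1-|w|^2)^{\alpha/p}(1-\langle w|z\rangle)^{-2\alpha/p}$ is an isometry of $A^p_\alpha$ satisfying $u_{U_wg}=u_g\circ\varphi_w$. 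In particular $U_w\mathbf 1$ is exactly the candidate extremizer in \eqref{eq f extre}, and since $u_{U_w\mathbf 1}=u_0\circ\varphi_w$, the change of variables $y=\varphi_w(z)$ immediately gives $\|U_w\mathbf 1\|_{A^p_\alpha}=1$ and turns the left-hand side of \eqref{eq li_su} into the right-hand side. This settles the sufficiency part.

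For the necessity I would first establish the \emph{pointwise bound} $u(z)\le1$ together with its rigidity: equality at a point $w$ forces $f=e^{i\theta}U_w\mathbf 1$. By the isometry $U_w$ it suffices to treat $w=0$, i.e.\ to prove $|f(0)|^p\le1$ when $\|f\|_{A^p_\alpha}=1$, with equality only for constants of modulus $1$. This is the exact analogue of \eqref{eq Fzero}: since $|f|^p$ is subharmonic (indeed plurisubharmonic) its spherical means are nondecreasing and dominate $|f(0)|^p$, so integrating against the radial probability weight defining $\|\cdot\|_{A^p_\alpha}$ gives $|f(0)|^p\le\|f\|_{A^p_\alpha}^p=1$; equality forces $|f|^p$ to be harmonic, hence $f$ constant, via $\Delta|f|^p=p^2|f|^{p-2}|\partial f|^2$. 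Next, repeating verbatim the proof of Lemma~\ref{lem lemma 23}—splitting $\Phi=\Phi_1+\Phi_2$ at $T:=\sup_\bB u$ and applying the assumed inequality \eqref{eq li_su} to the convex function $\Phi_1$—yields the stability estimate $\int_\bB\Phi(u_0)\,dv_g-\int_\bB\Phi(u)\,dv_g\ge\int_T^1(\Phi'(t)-\Phi'_-(T))\mu_0(t)\,dt$, where $\mu_0(t):=v_g(\{u_0>t\})$ is finite for $t>0$. Strict convexity makes the right-hand side strictly positive when $T<1$, so equality in \eqref{eq li_su} forces $T=1$. (The finiteness hypothesis on the right-hand side of \eqref{eq li_su} is precisely what renders these integrals meaningful against the infinite measure $dv_g$; it also forces $\lim_{t\to0^+}\Phi(t)=0$, so the value of $\Phi$ at $0$ is immaterial.)

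It remains to pass from $T=1$ to extremality, and here—unlike the compact setting of Section~\ref{sec 2}—one must first verify that the supremum is \emph{attained}; this is the step I expect to be the crux. The argument I would use is that $u\in L^1(dv_g)$ (indeed $\int_\bB u\,dv_g=c_\alpha^{-1}$, from $\Phi(t)=t$) combined with a localized, automorphism-invariant sub-mean-value inequality forces $u$ to vanish at the boundary. Concretely, applying the solid sub-mean-value inequality for $|g_w|^p$ on a fixed Euclidean ball $\{|z|<s\}$ to $g_w:=U_wf$ and transporting back by $\varphi_w$ gives a bound of the form $u(w)\le C_s\int_{B_h(w,\delta)}u\,dv_g$, with $C_s$ and the hyperbolic radius $\delta$ independent of $w$. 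Since a hyperbolic ball $B_h(w,\delta)$ of fixed radius is Euclidean-small and concentrates near $\partial\bB$ as $w\to\partial\bB$, dominated convergence (using $u\in L^1(dv_g)$) gives $\int_{B_h(w,\delta)}u\,dv_g\to0$, whence $u(w)\to0$ as $|w|\to1$. Being continuous on $\bB$ and vanishing at the boundary, $u$ attains its maximum $T=1$ at some interior point $w_0$, and the rigidity of the pointwise bound then yields $f=e^{i\theta}U_{w_0}\mathbf 1$, which is precisely \eqref{eq f extre}. The main obstacle is thus the attainment of the supremum, caused by the non-compactness of $\bB$ and the infiniteness of $dv_g$; everything else is a direct transcription of the $SU(N)$ argument, with $\mathrm{Aut}(\bB)$ replacing the group action.
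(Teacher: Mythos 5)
Your proof is correct and follows essentially the same route as the paper: the isometries $f\mapsto f_w$ built from the identity \eqref{eq phiw} give the sufficiency part, the truncation argument of Lemma \ref{lem lemma 23} (with the finiteness caveat for the infinite measure $dv_g$) forces $T=1$, and the subharmonicity rigidity from Proposition \ref{lem 25}, transported to an interior maximum point by the isometry, identifies $f$ as \eqref{eq f extre}. The only difference is that you prove the boundary decay of $u(z)=|f(z)|^p(1-|z|^2)^\alpha$ (hence attainment of the supremum) directly, via an automorphism-invariant sub-mean-value inequality together with $u\in L^1(dv_g)$, whereas the paper simply cites \cite[Theorem 2.1]{zhu_holomorphic} for this known fact.
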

\begin{proof}
 It is well known (see \cite[page 5]{zhu_holomorphic}) that, for every $w\in\bB$ there exists an automorphism $\varphi_w$ of $\bB$ (that is, a biholomorphic mapping $\bB\to\bB$ preserving the Bergman metric, and therefore the Bergman measure $dv_g$), with $\varphi_w(w)=0$, and satisfying 
\begin{equation}\label{eq phiw}
    1-|\varphi_w(z)|^2=\frac{(1-|w|^2)(1-|z|^2)}{|1-\langle w|z\rangle_{\C^n}|^2}\qquad z,w\in\bB. 
\end{equation}
As a consequence, the map $f\mapsto f_w$, given by 
\[
f_w(z):= f(\varphi_w(z))\frac{(1-|w|^2)^{\alpha/p}}
    {(1-\langle w|z
    \rangle_{\C^n})^{2\alpha/p}}\qquad z\in\bB,
\]
is an isometric isomorphism of $A^p_\alpha$. The first part of the statement follows from \eqref{eq phiw}, since equality (of course) occurs in \eqref{eq li_su} if $f\equiv 1$. 

Suppose now $\Phi$ is strictly convex and that the integral on the right-hand side of $\eqref{eq li_su}$ is finite. Let $f\in A^p_\alpha$, with $\|f\|_{A^p_\alpha}=1$, achieve equality in $\eqref{eq li_su}$. Set 
\[
T:=\sup_{z\in \bB} \big(|f(z)|^p (1-|z|^2)^{\alpha}\big).
\]
It is well known that $T\in (0,1]$ and that the above supremum is achieved, since $f(z) (1-|z|^2)^{\alpha/p}\to 0$ as $|z|\to 1^-$ (see \cite[Theorem 2.1]{zhu_holomorphic} and the subsequent discussion). We can then argue as in the proof of Lemma \ref{lem lemma 23} (observing that the integrals that arise from the decomposition $\Phi=\Phi_1+\Phi_2$ are finite) and we obtain
\begin{align*}
\int_\bB \Phi((1-|z|^2)^\alpha)\,dv_g(z)-\int_\bB &\Phi(|f(z)|^p(1-|z|^2)^\alpha)\,dv_g(z)\\
&\geq \int_T^1 (\Phi(t)-\Phi'_-(T))\mu_0(t)\, dt,
\end{align*}
where now 
\[
\mu_0(t):=v_g(\{z\in\bB:\ (1-|z|^2)^{\alpha}>t\}). 
\]
Hence, it is sufficient to prove that $T=1$ only for the functions $f$ in \eqref{eq f extre}, that is, that equality occurs in the pointwise estimate  
\[
|f(z)|^p (1-|z|^2)^{\alpha}\leq 1
\]
at some point $w\in\bB$, only if $f$ has the form in \eqref{eq f extre}. Again, using the above transformation $f\mapsto f_w$ we see that we can assume $w=0$. And if $f(0)=1$, arguing as in the first part of the proof of Proposition \ref{lem 25}, it follows that $f$ is constant ($f(z)=e^{i\theta}$ for some $\theta\in\R$). 
\end{proof}

	\section*{Acknowledgments} 
    We would like to thank J. Ortega-Cerd\`a for informing us about their work in progress \cite{GFOCprogress} about the concentration of holomorphic polynomials. 
    
    F.~N.~ is a Fellow of the {\em Accademia delle Scienze di Torino} and a member of the {\em Societ\`a Italiana di Scienze e Tecnologie Quantistiche (SISTEQ)}.
	

\end{document}